\theoremstyle{thmstyleone}%
\newtheorem{theorem}{Theorem}
\theoremstyle{thmstyletwo}%
\theoremstyle{thmstylethree}%
\newcommand{\diff}{\,\mathrm{d}}
\DeclareMathOperator*{\argmax}{\arg\!\max}
\newcommand{\figref}[1]{Figure~\ref{#1}}
\begin{document}

\title[Efficient Inference in First Passage Time Models]{Efficient Inference in First Passage Time Models}


\author*[1]{\fnm{Sicheng} \sur{Liu}}\email{sicheng\_liu@brown.edu}

\author[2]{\fnm{Alexander} \sur{Fengler}}\email{alexander\_fengler@brown.edu}

\author[2,3]{\fnm{Michael J.} \sur{Frank}}\email{michael\_frank@brown.edu}

\author[1]{\fnm{Matthew T.} \sur{Harrison}}\email{matthew\_harrison@brown.edu}

\affil*[1]{\orgdiv{Division of Applied Mathematics}, \orgname{Brown University}, \orgaddress{\street{182 George St}, \city{Providence}, \postcode{02912}, \state{RI}, \country{USA}}}

\affil[2]{\orgdiv{Department of Cognitive and Psychological Sciences}, \orgname{Brown University}, \orgaddress{\street{190 Thayer St}, \city{Providence}, \postcode{02912}, \state{RI }, \country{USA}}}

\affil[3]{\orgdiv{Carney Institute for Brain Science}, \orgname{Brown University}, \orgaddress{\street{164 Angell St}, \city{Providence}, \postcode{02912}, \state{RI}, \country{USA}}}


\abstract{First passage time models describe the time it takes for a random process to exit a region of interest and are widely used across various scientific fields. Fast and accurate numerical methods for computing the likelihood function in these models are essential for efficient statistical inference of model parameters. Specifically, in computational cognitive neuroscience, generalized drift diffusion models (GDDMs) are an important class of first passage time models that describe the latent psychological processes underlying simple decision-making scenarios. GDDMs model the joint distribution over choices and response times as the first hitting time of a one-dimensional stochastic differential equation (SDE) to possibly time-varying upper and lower boundaries. They are widely applied to extract parameters associated with distinct cognitive and neural mechanisms. However, current likelihood computation methods struggle in common application scenarios in which drift rates dynamically vary within trials as a function of exogenous covariates (e.g., brain activity in specific regions or visual fixations). In this work, we propose a fast and flexible algorithm for computing the likelihood function of GDDMs based on a large class of SDEs satisfying the Cherkasov condition. Our method divides each trial into discrete stages, employs fast analytical results to compute stage-wise densities, and integrates these to compute the overall trial-wise likelihood. Numerical examples demonstrate that our method not only yields accurate likelihood evaluations for efficient statistical inference, but also considerably outperforms existing approaches in terms of speed.}

\keywords{first passage time, drift diffusion model, attention, likelihood-based inference, numerical methods, Cherkasov condition}



\maketitle

\section{Introduction}\label{sec: intro}
First passage time models (also known as first hitting time models) are a fundamental class of stochastic process models that describe the time it takes for a random process to exit a region of interest. These models are widely used in various fields, including physics, finance, biology, and cognitive science, to analyze phenomena such as the time until a stock hits a particular price level, the time it takes for a neuron to fire, a survival time of a transplant patient, or the time to reach a decision barrier that triggers a choice in psychological experiments \citep{bhattacharya2021random, metzler2014first, gardiner2009stochastic, lee2006threshold}.
Accurately computing the density of the first passage time is crucial for the practical application of these models, such as in scientific hypothesis testing and parameter estimation. Since analytical solutions for many complex real-world first passage time models are unavailable, effective numerical methods are an indispensable part of the computational toolkit for applying these models. In this paper, we introduce a fast and flexible method for computing the first passage time density in a large class of models arising in cognitive neuroscience, mathematical psychology, and economics.

Consider a psychological experiment in which a subject must choose between two alternatives, A and B. The experimenter records both the choice and the time taken to make it. A common approach to modeling this behavior is through a stochastic process $X(t)$ that starts at zero and evolves randomly over time until it first reaches ${+}a$ or ${-}a$, where $a$ is the evidence threshold. The choice (A vs. B) is modeled by which boundary is hit (${+}a$ vs.~${-}a$) and the response time is modeled by the first hitting time (Fig. \ref{fig: simple-ddm}). 

\begin{figure*}[htb!]
	\centering
	\includegraphics[width=\textwidth]{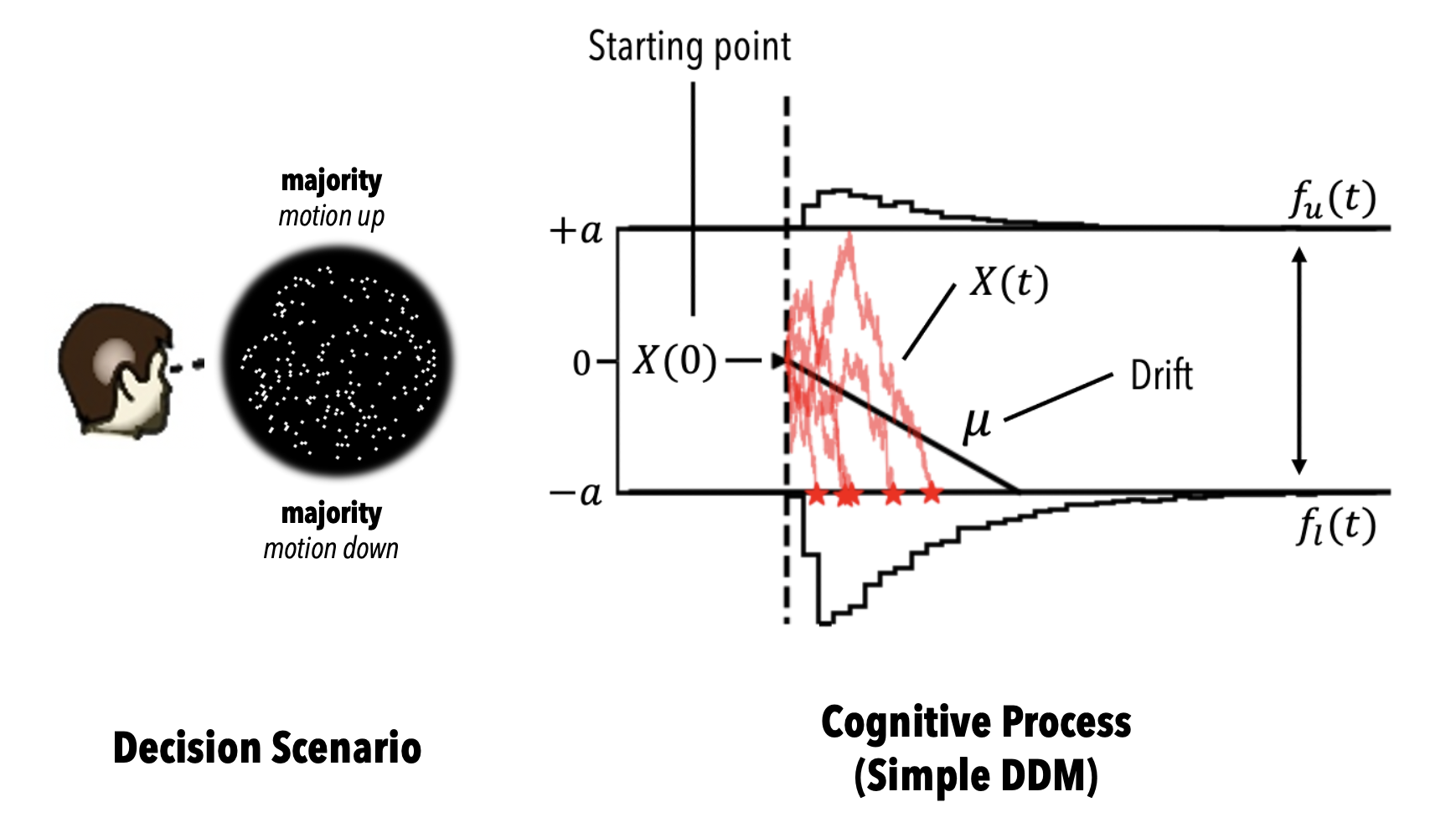}
	\label{fig: single-stage-basic}
	\caption{The simple DDM in the context of a canonical experiment in cognitive science, the \textit{dot-motion task}. The DDM serves to represent the \textit{cognitive process} underpinning the choices in this experiment. The task explicitly given to participants: Decide if the majority of dots on the screen are moving up or down. We show trajectories of the corresponding SDE, $\diff X(t) = \mu  \diff t + \diff W(t)$, in \textcolor{red}{red}, and example decision points as \textit{red stars}. The first hitting times distributions $f_u(t), \ f_\ell(t)$ for the corresponding choices (\textit{up}, \textit{down}) are shown above the decision boundaries $\{+a, -a\}$.}
	\label{fig: simple-ddm}
\end{figure*}

A widely used model for $X(t)$ is a Brownian motion with drift
\[ X(t) = \mu t + W(t) , \]
where $W(t)$ is a standard Brownian motion. The drift rate $\mu$ quantifies the strength of the evidence that favors A over B and affects the joint distribution of the choice and the response time. Models such as this are often called {\bf drift diffusion models (DDMs)} \citep{ratcliff2008, fengler2022} and written in differential form, such as,
\begin{equation}\label{eqn: sde0}
	\diff X(t) = \mu  \diff t + \diff W(t)  
\end{equation}
with the initial condition $X(0)=0$.  

This basic DDM can be modified in a variety of ways to better model experimental data and test theories about the biological processes underlying decision making. For instance, the boundaries ($\pm a$) can be asymmetric and vary over time to reflect changing pressures to make a decision; the starting position $X(0)$ can be nonzero or random, representing an initial decision bias; the drift rate towards the boundaries can depend on the presented options and vary over time; and the random process $X(t)$ can have more complicated stochastic dynamics than Brownian motion \citep{fengler2022, usher2001time, malhotra2018, palestro2018, evans2020parameter}. We adopt the terminology from \cite{shinn2020flexible} and call these models generalized drift diffusion models (GDDMs). Simple, computationally tractable versions of GDDMs have found widespread use in cognitive neuroscience and mathematical psychology \citep{ratcliff1978, ratcliff2008, frank2015, doi2020, yartsev2018}. While more complex versions of GDDMs have been proposed to encode a broader range of scientific hypotheses, their quantitative assessment and applications are currently limited due to computational challenges in parameter estimation \citep{fengler2022}. In particular there do not exist fast algorithms for calculating the likelihood function of the parameters (i.e., the probability density of the hitting time on the boundary reached) in some of the more complex GDDMs that are currently of scientific interest \citep{krajbich2010visual, GingJehli2025, pagnier2024double, kuhn2025computational}.

Computational speed is crucial in many cases for the following reason: As noted in \cite{lee2006threshold}, first passage time models commonly incorporate covariates that may vary across individuals and even across individual trials. This is particularly relevant in computational cognitive neuroscience, where the precise specification of the GDDM might change every time a subject makes a choice, i.e., in each experimental trial. An experiment with, say, $50$ subjects, each completing $200$ trials, might require evaluating the likelihoods of $10^4$ different versions of the GDDM to get the overall dataset likelihood. For instance, in equation \eqref{eqn: sde0} the drift rate $\mu$ might be different on every trial if the choices presented to subjects are different on every trial. The problem becomes even more difficult when the trial-specific drift rate can vary dynamically within a trial. 

For a specific example, our original motivation to study this problem stems from  attentional drift diffusion models (aDDMs)\citep{krajbich2010visual}, which have been proposed as a generative model for investigating how visual attention influences decision making. In aDDMs, the drift rate $\mu(t)$ in each trial is allowed to covary on a moment-by-moment basis, depending on the subject's eye movements toward one choice option or another, which are also recorded in the experimental setup. The entire shape of $\mu(t)$ changes dramatically on every trial. This model is a generic example of the general class of GDDMs where instantaneous within-trial dynamics can be based on exogenous covariates such as brain activity or eye gaze position, hence complicating the numerical procedures for computing the first passage time densities and further applications. On top of this, statistical inference tasks would still require repeated evaluations of the total dataset likelihood at different parameter values, both in frequentist and Bayesian frameworks. In particular, hierarchical Bayesian inference is widely employed in cognitive science to estimate both group- and subject-level parameters simultaneously \citep{wiecki2013hddm, fengler2022}. The heavy reliance on iterative Markov chain Monte Carlo (MCMC) methods in this setting further intensifies the need for fast likelihood evaluations.

In this paper we introduce fast algorithms for accurately approximating the probability density of the hitting time in a class of GDDMs that have continuous lower and upper boundaries, $\ell(t)$ and $u(t)$, and with dynamics that satisfy the It\^{o}'s stochastic differential equation (SDE)
\begin{equation}\label{eqn: sde}
	\diff X(t)=\mu(X(t), t) \diff t+\sigma(X(t), t) \diff W(t)
\end{equation}
for possibly random $X(0)$ and for functions $\mu$ and $\sigma$ satisfying the so-called Cherkasov condition \citep{cherkasov1957transformation, ricciardi1976transformation}. Loosely speaking, the Cherkasov condition means that the complicated SDE in equation \eqref{eqn: sde} can be transformed into a simpler SDE more akin to equation \eqref{eqn: sde0}. A special case that includes many cases of practical interest, such as the aDDM, is when $\mu$ and $\sigma$ depend only on $t$ and not on $X(t)$\footnote{In the cognitive science context, the terminology ``DDM" is typically adopted to denote these models that directly integrate noisy accumulating evidence. The more general model in equation \eqref{eqn: sde}, including some examples shown later in this paper (e.g., the Ornstein-Uhlenbeck model), fall within the GDDM framework.}. There exist numerical approaches and software for approximating hitting time densities in models even more general than equation \eqref{eqn: sde}, but in many cases they are currently too slow for practical use in real-world psychological experiments.

Our approach shares some conceptual similarities to a line of previous work \citep{wang1997boundary, novikov1999approximations, potzelberger2001boundary, wang2007crossing, jin2017first}. Both those methods and ours transform the problem into a canonical representation, subdivide time into discrete stages that allow for analytical approximations, compute the final approximation through integration over these stages, and then map the result back to the original problem. However, none of these earlier approaches are directly applicable to our setting or capable of statistical computation at the scale of the present work (see Section \ref{sec: related_work}). In contrast, by leveraging accurate analytical results and numerical integration in a sequential manner, our semi-analytical-numerical approach achieves substantially higher computational efficiency. As shown in Section \ref{sec: timing}, it only requires microseconds for our method to evaluate a single likelihood value in a realistic aDDM example. These innovations and efficiency gains are crucial for practical likelihood-based inference in modern experimental settings. 


The remainder of this article is structured as follows. Section \ref{sec: ddm} begins with a mathematical description of the GDDM. Section \ref{sec: related_work} summarizes previous work related to this paper, including other efforts to address the computational challenges in first passage time models. In Section \ref{sec: methods}, we introduce our proposed method for computing the first passage time density in certain GDDMs, starting with analytical results for simple DDMs, then combining them sequentially based on Markovian properties. We also apply variable transformations and piecewise linear approximations to extend the method to more complex GDDMs, including those that allow for position-dependent drift and diffusion terms as well as dynamic decision boundaries \citep{usher2001time,fengler2022,GingJehli2025}. 
Section \ref{sec: examples} shows several numerical examples that demonstrate the speed and accuracy of our method for both likelihood computation and statistical inference tasks. Section \ref{sec: conclusion} summarizes the present work, including its applicability and suggestions for future work. It is worth mentioning that while our examples are drawn from cognitive neuroscience and mathematical psychology, the method we propose is equally applicable to first passage time models in other fields. For instance, it could be extended to financial applications such as pricing double barrier options \citep{lin1998double}, as well as other areas where first passage times play a critical role.

\section{Drift Diffusion Model}\label{sec: ddm}
Here we present a mathematical description of GDDMs. Let $W(t)$ be a one-dimensional standard Brownian motion and let $X(t)$ be a (weak) solution of the It\^o's stochastic differential equation (SDE) in equation \eqref{eqn: sde} associated with $W(t)$, where $X(0)$ is independent of $W$. Let $u(t)$ and $\ell(t)$ be two continuous functions representing the upper and lower boundaries, respectively, satisfying $\mathbb{P}(\ell(0)<X(0)<u(0))=1$ and $\ell(t) \leq u(t)$ for all $t > 0$.

In this setting, we aim to study the joint distribution of the first exit time of $X(t)$ from the time-dependent region $(\ell(t), u(t))$ and the specific boundary exited (i.e., which choice is made). The first exit time is denoted as $\tau = \inf \{t>0: X(t) \notin(\ell(t), u(t))\}$, and the boundary hit is denoted as $C$, which takes values in the symbol set $\{u, \ell\}$ to represent which boundary is hit. The event of $X(t)$ hitting the boundary $C$ at time $\tau$ models the participant's decision to choose option $C$ at time $\tau$. The event $\{\tau=\infty\}$ can never be observed in practice, so we do not explicitly discuss it below. If $\mathbb{P}(\tau = \infty) > 0$, then the probability density functions of $\tau$ discussed in this paper are technically sub-probability density functions that integrate to $\mathbb{P}(\tau < \infty) < 1$. This does not affect any of the calculations needed for likelihood-based inference.

We denote the sub-probability density functions 
\begin{equation*}
	\begin{aligned}
		f_u(t)&=\frac{\diff}{\diff t}\mathbb{P}(\tau\le t, C=u)\\
		f_{\ell}(t)&=\frac{\diff}{\diff t}\mathbb{P}(\tau\le t, C=\ell)    
	\end{aligned}
\end{equation*}
as the first passage time densities (FPTDs) on the upper boundary and lower boundary\footnote{Note that ``first passage time density on the upper (lower) boundary" is sometimes interpreted as the density of $\tau_u=\inf\{t>0:X(t)\ge u(t)\}(\tau_\ell=\inf\{t>0:X(t)\le \ell(t)\})$ when only considering one upper boundary $u(t)$ (lower boundary $\ell(t)$). In this case $\tau=\tau_u\wedge\tau_\ell$. These quantities represent different aspects compared to our definition of the FPTD on the upper and lower boundaries and should not be confused.}, respectively, implicitly assuming throughout sufficient regularity conditions on the model for these densities to exist (see the sufficient conditions for density existence in \cite{ferebee1982tangent, strassen1967almost}). The marginal density of $\tau$ is thus given by $\frac{\diff}{\diff t}\mathbb{P}(\tau\le t) = f_u(t)+f_\ell(t)$. With this notation, we can write the joint probability density of $(\tau, C)$ as 
\begin{equation} \label{eqn: f(t,c)}
	f(t, c)
	= f_u(t)\mathbbm{1}_{\{u\}}(c)+f_{\ell}(t)\mathbbm{1}_{\{\ell\}}(c) ,
\end{equation}
where $\mathbbm{1}_E(\omega)$ is the indicator function of $E$ that is $1$ if $\omega\in E$ and $0$ otherwise.

If we consider the possible truncation of the process at $t=T < \infty$, then the first exit from the open interval $(\ell(t), u(t))$ occurs either through the upper boundary $u(t)$ before $T$ (i.e., the event $\{\tau \le T, C=u\}$), through the lower boundary $\ell(t)$ before $T$ (i.e., the event $\{\tau \le T, C=\ell\}$), or does not occur before $T$ (i.e., the event $\{\tau > T\}$), in which case the process remains inside $(\ell(t), u(t))$ up to $T$ and reaches the truncation boundary $t=T$. These three mutually exclusive outcomes give rise to three sub-probability densities:
\begin{itemize}
	\item $f_u(t)\mathbbm{1}_{(0, T]}(t)$: the FPTD on the upper boundary $u(t)$, truncated at $t=T$.
	\item $f_\ell(t)\mathbbm{1}_{(0, T]}(t)$: the FPTD on the lower boundary $\ell(t)$, truncated at $t=T$.
	\item $q(x)= \frac{\diff}{\diff x} \mathbb{P}(X(T)\le x, \tau > T)$: the density of the process's position at time $T$, provided that it has not hit either upper or lower boundary yet. We refer to $q$ as the non-passage density (NPD) at $T$.
\end{itemize}
By the conservation of total probability we always have
\begin{equation*}
	\int_0^T f_u(t)\diff t + \int_0^T f_\ell(t)\diff t + \int_{\ell(T)}^{u(T)} q(x)\diff x=1
\end{equation*}
for any $T>0$. Define the non-passage probability (NPP) at $T$ to be
\begin{equation} \label{eqn: Q} Q = \mathbb{P}(\tau > T) = \int_{\ell(T)}^{u(T)} q(x)\diff x . \end{equation} 
$Q$ is useful in settings where the process is artificially stopped whenever $\tau > T$. For instance, in some psychological experiments, subjects may be required to respond within a certain amount of time $T$, otherwise the trial ends with the outcome ``non-response'' and the NPP $Q$ is the probability of non-response. 

The goal of this paper is to obtain a fast and accurate numerical approximation of the joint density $f(t,c)$ in equation \eqref{eqn: f(t,c)} and the NPP $Q$ in equation \eqref{eqn: Q}. To this end, we develop fast numerical approximations of FPTDs and NPDs, the latter of which are needed for intermediate calculations in our approach.

As mentioned in the Introduction, we are motivated by practical problems for which we need to evaluate many {\em different} joint densities \eqref{eqn: f(t,c)}. We define $\boldsymbol{\Xi}=\{\mu(\cdot), \sigma(\cdot), u(\cdot), \ell(\cdot), q_0(\cdot)\}$ to be the configuration of a GDDM, where $q_0$ is a representation for the distribution of $X(0)$, usually a probability density function or a point mass at a deterministic starting point $x_0$. In statistical applications, $\boldsymbol{\Xi}$ will depend on unknown parameters and known covariates, some of which might vary within and across each trial. If the $i$-th trial results in the observation $(\tau_i,C_i)$, then the contribution to the likelihood for that trial is $f(\tau_i,C_i\mid\boldsymbol{\Xi}_i)$. This is simply the FPTD on the boundary $C_i$ evaluated at time $\tau_i$ using the GDDM configuration $\boldsymbol{\Xi}_i$ for the $i$-th trial. In experiments with a time limit, some trials might result in non-response, in which case the contribution to the likelihood is the NPP for that GDDM configuration, say $Q_i=Q(\boldsymbol{\Xi}_i)$. Consequently, denoting $R_i=\mathbbm{1}\{\text{a response on trial $i$}\}$, the likelihood function is then given by\footnote{The statistical model for $\tau$ here is a mixed distribution consisting of a continuous part supported on $\tau \in(0, T]$ (response) and a point mass at $\tau>T$ (no response). Mixed likelihoods of this form are common in survival analysis, for example see the Section 2.2 of \cite{lawless2011statistical}.}
\begin{equation*}
\begin{aligned}
&\mathcal{L}_{\text{GDDM}}(\boldsymbol{\theta}\mid ((\tau_i, C_i))_{i=1}^{n})=\prod_{i=1}^{n} f(\tau_i, C_i \mid \boldsymbol{\Xi}_i)^{R_i}Q(\boldsymbol{\Xi}_i)^{1-R_i}.
\end{aligned}
\end{equation*}
A usual experiment in the cognitive sciences, where GDDMs are ubiquitously applied, produces a dataset with thousands or tens of thousands of trials (across multiple subjects). Consequently, fast and accurate methods for evaluating FPTDs and NPPs are crucial for the practical use of likelihood-based inference in these models.

\section{Related Work}\label{sec: related_work}
We briefly review some commonly used computational methods for calculating the density or distribution of first passage times in GDDMs.

\textbf{Analytic or semi-analytic methods.} For certain simple DDMs, analytic formulas for the first passage time densities or distributions can be derived, often in the form of infinite series. These methods are fast and accurate, but are typically applicable only for very specific cases. For example, they are available for the standard DDM where all parameters $\mu, \sigma, x_0, u, \ell$ are constant \citep{navarro2009fast, blurton2012fast, gondan2014even}. More recently, \cite{srivastava2017martingale} used martingale theory to derive closed-form formulas that extend certain theoretical analyses to GDDMs.

The approach in this paper is based on numerically extending these fast special cases to much broader classes of GDDMs. Along this line, earlier works \citep{wang1997boundary, novikov1999approximations, potzelberger2001boundary, wang2007crossing} computed NPPs,
while \cite{jin2017first}, \cite{ichiba2011efficient}, and the present work target FPTDs,
although \cite{jin2017first} and \cite{ichiba2011efficient} restrict attention to the special case of a one-sided boundary. The FPTD case is more challenging but also more useful for statistical inference since it directly gives the likelihood. A key limitation of many prior approaches \citep{wang1997boundary, potzelberger2001boundary, wang2007crossing, ichiba2011efficient, jin2017first}, is that they rely on Monte Carlo integration, 
which is infeasible for large-scale real-world applications.

\textbf{Numerical solutions of governing equations.} These methods exploit the known connections between diffusion processes and partial differential equations (PDEs), called Kolmogorov equations (see the Chapter 8 of \cite{weinan2021applied} for an introduction, and Chapter 5.7 of \cite{karatzas2014brownian} for a mathematically rigorous treatment). Both the Kolmogorov forward equation (KFE, also known as Fokker-Planck equation in physics) \citep{shinn2020flexible, diederich2003simple} and the Kolmogorov backward equation (KBE) \citep{voss2008fast, voss2007fast} have been exploited to derive the FPTDs. Moreover, the first passage time problem can be formulated into Volterra integral equations of renewal type \citep{smith2000stochastic, peskir2002integral}, where applications to DDMs \citep{smith2022modeling, paninski2008integral} have also been explored. In these works, numerical methods, particularly finite difference methods \citep{thomas2013numerical}, have been developed to solve these equations, enabling the efficient computation of FPTDs in certain cases. Other numerical methods like discontinuous Galerkin methods have also been considered, see \cite{patie2008first} and references therein. However, these approaches remain computationally expensive, especially in settings where the GDDM configuration differs across trials, including but not limited to cases with trial-specific covariates that vary dynamically over time. For a comprehensive review of these methods, we refer readers to \cite{richter2023diffusion}.

\textbf{Stochastic simulation.} Methods that can sample from the hitting time distribution of the target GDDM can be used to approximate FPTDs via density estimation. Efficient methods for exact sampling are challenging; see \cite{herrmann2020exact, herrmann2022exact} and the references therein. If the goal is statistical inference, then these sampling algorithms can be incorporated into a variety of simulation-based inference (SBI) procedures; see \cite{cranmer2020frontier} for review and \cite{fengler2021likelihood} for an SBI example with DDMs. Nevertheless, the Monte Carlo methods underlying SBI are computationally intensive and can become impractical for large datasets, particularly when $\Xi$ involves high-dimensional parameters or covariates that vary dynamically over time and across trials.

\section{Main Methods}\label{sec: methods}
In this section, we present our main algorithm for computing FPTDs and NPDs. We begin by studying a simple case where analytical results are known in Section \ref{sec: single-stage} then generalize it to more complicated cases in Sections \ref{sec: int single-stage}--\ref{sec: generalization}.

\subsection{Single-stage Model}\label{sec: single-stage}
We consider the Brownian motion with drift
\begin{equation}\label{eqn: single-stage}
	\diff X(t)=\mu \diff t+\sigma\diff W(t)
\end{equation}
starting at $x_0$. There are two linear absorbing boundaries
\begin{equation*}
	\begin{aligned}
		\text{upper: }u(t)=a_1+b_1t,\quad \text{lower: }\ell(t)=a_2+b_2t
	\end{aligned}
\end{equation*}
with $a_2<x_0<a_1$, and, additionally, a truncation boundary at $t=T$
such that $\ell(t) < u(t)$ on $[0,T)$. We name this the ``\textbf{single-stage model}" as it is the building block of our method.

\cite{hall1997distribution} derives the following analytical formulas\footnote{Another work along this direction is \cite{anderson1960modification}. We adopt the results in \cite{hall1997distribution} since they are simpler and were shown to agree with those of \cite{anderson1960modification}. Equation \eqref{eqn: fptd_upper}, \eqref{eqn: fptd_lower}, and \eqref{eqn: fptd_vertical} have been slightly rewritten for use in this paper, compared to their original form in \cite{hall1997distribution}.} for FPTDs and NPDs in the single-stage model for the special case when $x_0=0$ and $\sigma=1$: We define
$\overline{a}=\frac{a_1+a_2}{2},\ \overline{b}=\frac{b_1+b_2}{2},\ c=a_1-a_2,\ b=\frac{b_2-b_1}{2}$, then
\begin{itemize}
	\item The FPTD on the upper boundary $u(t)$ is given by
	\begin{equation}\label{eqn: fptd_upper-basic}
		\begin{aligned}
			&f^\text{basic}_u(t;\mu,a_1, b_1, a_2, b_2, T) = \\
			&\quad\frac{1}{\sqrt{2\pi t^3}}  e^{-\frac{b}{c} a_1^2+a_1 \delta_u-\frac{1}{2} \delta_u^2 t}\sum_{j=0}^{\infty}(-1)^j \alpha_j e^{\left(\frac{b}{c}-\frac{1}{2t}\right)\alpha_j^2}\mathbbm{1}_{(0, T]}(t)
		\end{aligned}
	\end{equation}
	where $\delta_u=\mu-b_1,\alpha_j=(j+\frac{1}{2})c+(-1)^j\overline{a}$.
	\item Similarly, the FPTD on the lower boundary $\ell(t)$ is given by
	\begin{equation}\label{eqn: fptd_lower-basic}
		\begin{aligned}
			&f^\text{basic}_\ell(t;\mu,a_1, b_1, a_2, b_2, T)=\\
			&\quad\frac{1}{\sqrt{2\pi t^3}}  e^{-\frac{b}{c} a_2^2-a_2 \delta_\ell-\frac{1}{2} \delta_\ell^2 t}\sum_{j=0}^{\infty}(-1)^j \beta_j e^{\left(\frac{b}{c}-\frac{1}{2t}\right)\beta_j^2}\mathbbm{1}_{(0, T]}(t)
		\end{aligned}
	\end{equation}
	where $\delta_\ell=-\mu+b_2, \beta_j=(j+\frac{1}{2})c-(-1)^j\overline{a}$.
	
	\item Letting $y=x-\overline{b}T$, the NPD at the vertical boundary $t=T$ is given by
	\begin{equation}\label{eqn: fptd_vertical-basic}
		\begin{aligned}
			&q^\text{basic}(x;\mu,a_1, b_1, a_2, b_2, T)=\\
			&\quad\frac{1}{\sqrt{2\pi T}}e^{(\mu-\overline{b}) x-\frac{1}{2}(\mu^2-\overline{b}^2) T}\bigg\{ e^{-\frac{y^2}{2T}}+\\
            &\quad\sum_{j=1}^{\infty}\bigg[e^{4 b j(j c-\overline{a})-\frac{(y-2 j c)^2}{2T}}-e^{2 b(2 j-1)\left(j c-a_1\right)-\frac{(y+2 j c-2 a_1)^2}{2T}}+\\
            &\quad e^{4 b j(j c+\overline{a})-\frac{(y+2 j c)^2}{2T}}-e^{2 b(2 j-1)\left(j c+a_2\right)-\frac{(y-2 j c-2 a_2)^2}{2T}}\bigg]\bigg\}\mathbbm{1}_{(\ell(T), u(T))}(x).
		\end{aligned}
	\end{equation}
\end{itemize}
For the single-stage model \eqref{eqn: single-stage} with a nonzero starting point $x_0$ and a diffusion coefficient $\sigma$ that does not equal $1$, it is straightforward to obtain its FPTDs and NPD by some shifting and scaling: For notational convenience we further let $\mathcal{B}=(a_1, b_1, a_2, b_2, T)$ denote the collection of all the boundary parameters, then we have
\begin{align} \label{eqn: fptd_upper}
	& f_u^{\text{single}}(t ; \mu, \sigma, \mathcal{B}, x_0)= f_u^{\text{basic}}\Big(t ; \tfrac{\mu}{\sigma}, \tfrac{a_1-x_0}{\sigma}, \tfrac{b_1}{\sigma}, \tfrac{a_2-x_0}{\sigma}, \tfrac{b_2}{\sigma}, T\Big), \\ \label{eqn: fptd_lower}
	& f_{\ell}^{\text{single}}(t ; \mu, \sigma, \mathcal{B}, x_0)=  f_{\ell}^{\text{basic}}\Big(t ; \tfrac{\mu}{\sigma}, \tfrac{a_1-x_0}{\sigma}, \tfrac{b_1}{\sigma}, \tfrac{a_2-x_0}{\sigma}, \tfrac{b_2}{\sigma}, T\Big),\\ \label{eqn: fptd_vertical}
	& q^{\text{single}}(x ; \mu, \sigma, \mathcal{B}, x_0)=\tfrac{1}{\sigma} q^{\text{basic}}\Big(\tfrac{x-x_0}{\sigma} ; \tfrac{\mu}{\sigma}, \tfrac{a_1-x_0}{\sigma}, \tfrac{b_1}{\sigma}, \tfrac{a_2-x_0}{\sigma}, \tfrac{b_2}{\sigma}, T\Big).
\end{align}
An illustration of the single-stage model is shown in \figref{fig: single-stage}(a).

\begin{figure*}[htb!]
	\centering
	\includegraphics[width=0.8\textwidth]{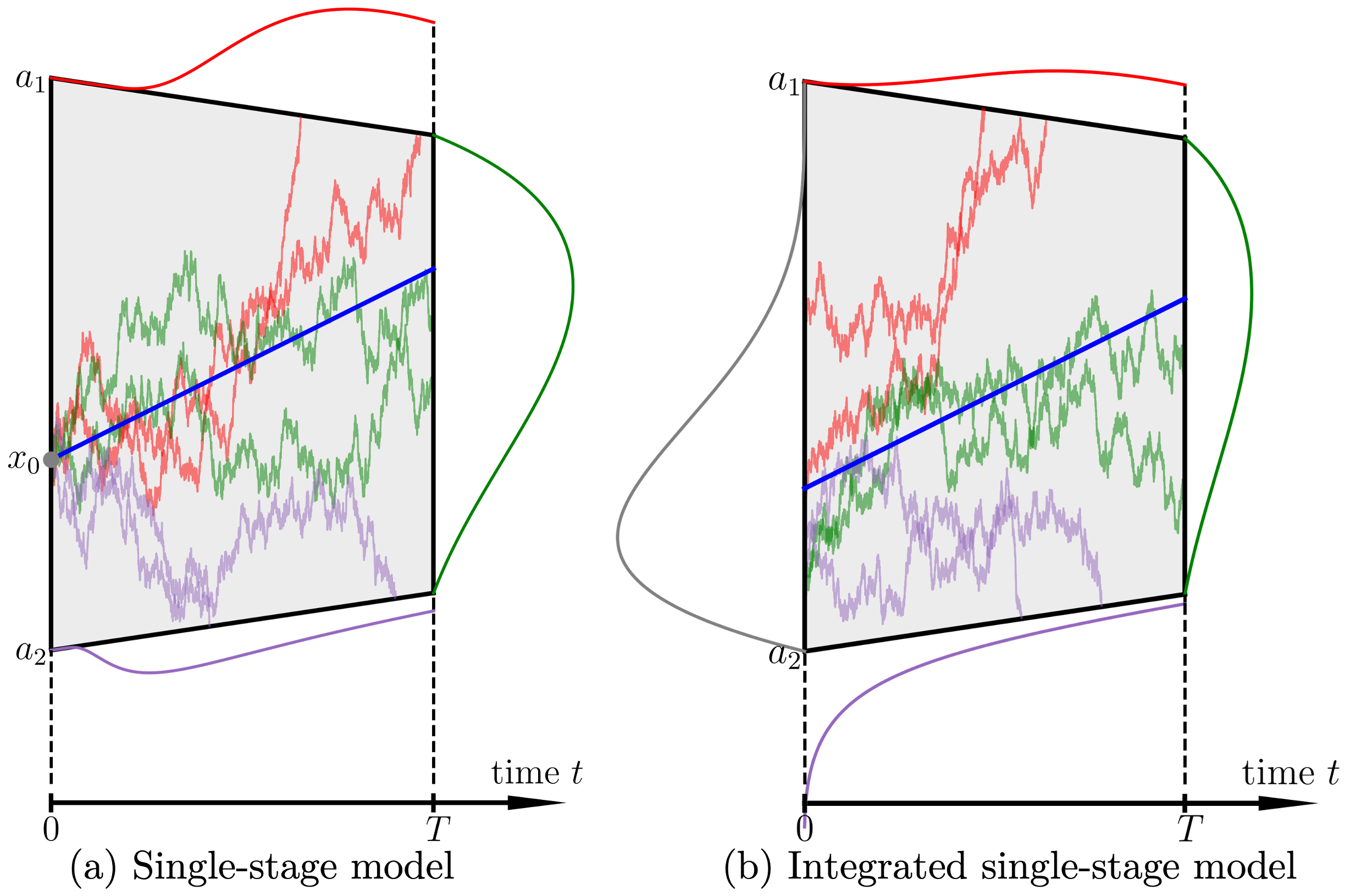}
	\label{fig: single-stage1}
	\caption{(a) A single-stage model is a GDDM with constant drift $\mu$ and two linear boundaries $u(t), \ell(t)$ starting at $x=x_0$ and truncated at $t=T$. The process can hit one of $u(t)$ and $\ell(t)$ before $T$, or hit neither. This results in three distinct sub-probability distributions, corresponding to these three outcomes, with their sub-probability densities plotted by red, purple, and green curves along the boundaries. Sample trajectories, colored to match their respective outcomes, are also displayed. The blue line represents the expected value of the process, given by $\mathbb{E}[X(t)]=x_0+\mu t$. In a single-stage model the total area under the three sub-probability density curves sums to 1.  (b) The integrated model extends the single-stage model by introducing random variability in the initial position $X(0)$. The expected value of the process in this model is given by $\mathbb{E}[X(t)]=\mathbb{E}[X(0)]+\mu t$, represented by the blue line. The density of $X(0)$, denoted by $q_0$, is illustrated by the grey curve. We allow $q_0$ to be a sub-probability density function. In an integrated single-stage model, the total area under the three outcome sub-probability density curves sums to the area under the initial grey curve.
	}
	\label{fig: single-stage}
\end{figure*}

\subsection{Integrated Single-stage Model}\label{sec: int single-stage}

The single-stage model uses $X(0)=x_0$ deterministically. If, instead, $X(0)$ is random, then the corresponding FPTDs and NPDs can be obtained by taking the expectations of 
\eqref{eqn: fptd_upper}, \eqref{eqn: fptd_lower}, and \eqref{eqn: fptd_vertical} over the distribution of $X(0)$. We call this the {\bf integrated single-stage model}. If the distribution of $X(0)$ has a probability density function $q_0$ supported on $(a_2,a_1)$, then FPTDs and NPDs for the integrated single-stage model are
\begin{equation} \label{eqn: int}
	\begin{aligned}
		&f_u^{\text{int}}(t;\mu,\sigma, \mathcal{B},q_0)=\int_{a_2}^{a_1} f_u^{\text{single}}(t;\mu,\sigma, \mathcal{B}, x_0)q_0(x_0)\diff x_0 , \\
		&f_\ell^{\text{int}}(t;\mu, \sigma,\mathcal{B},q_0)=\int_{a_2}^{a_1} f_\ell^{\text{single}}(t;\mu, \sigma,\mathcal{B}, x_0)q_0(x_0)\diff x_0 , \\
		&q^{\text{int}}(x;\mu, \sigma, \mathcal{B},q_0)=\int_{a_2}^{a_1} q^{\text{single}}(x;\mu, \sigma, \mathcal{B}, x_0)q_0(x_0)\diff x_0 . 
	\end{aligned}
\end{equation}
An illustration of the integrated single-stage model when $X(0)$ has a probability density function is shown in \figref{fig: single-stage}(b). When $X(0)$ is a discrete random variable, we abuse notation and allow $q_0=\sum_{j=1}^J w_{0j} \delta_{x_{0j}}$ to be a weighted sum of point masses, where $(w_{01}, \ldots, w_{0J})$ is a probability mass function on $(x_{01}, \ldots, x_{0J})$ and $\delta$ is the Dirac delta generalized function, so that equation \eqref{eqn: int} still holds formally. For instance, when $q_0=\delta_{x_0}$ we have $f_u^{\text{int}}(t;\mu,\sigma, \mathcal{B},q_0)=f_u^{\text{single}}(t;\mu,\sigma,\mathcal{B},x_0)$ and similarly for $f_\ell^{\text{int}}$ and $q^{\text{int}}$, so that a single-stage model can be viewed as a special case of the integrated single-stage model. We use the same definitions and formulas in equation \eqref{eqn: int} when $q_0$ is a sub-probability density function, and this usage forms the basis of our multi-stage model below.

\subsection{Multi-stage Model}\label{sec: multi}
We define \textbf{multi-stage models} to be GDDMs with piecewise constant drift coefficient, piecewise constant diffusion coefficient, and continuous piecewise linear boundaries, which can be divided at their segment points such that each stage reduces to an integrated single-stage model. To be specific, a $d$-stage model is a stochastic process
\begin{equation*}
	\diff X(t)=\mu(t)\diff t + \sigma(t)\diff W(t)
\end{equation*}
stopped by a continuous piecewise linear upper boundary $u(t)$, a continuous piecewise linear lower boundary $\ell(t)$ and a vertical boundary $t=T_{\text{end}}$, where $\mu(t)$ is a piecewise constant drift function and $\sigma(t)$ is a piecewise constant diffusion function. Let $0<t_1<\cdots<t_{d-1}<T_{\text{end}}$ be the $d-1$ segment points for $\mu(t), \sigma(t), u(t)$ and $\ell(t)$ prior to $T_{\text{end}}$, and define $t_0= 0$ and $t_d = T_{\text{end}}$. Then $\mu(t), \sigma(t), u(t)$ and $\ell(t)$ can be defined as
\begin{equation*}
	\left.
	\begin{aligned}
		\mu(t)&=\mu_k \\
		\sigma(t)&=\sigma_k \\
		u(t)&=a_{k1}+b_{k1}(t-t_{k-1})\\
		\ell(t)&=a_{k2}+b_{k2}(t-t_{k-1})  
	\end{aligned}
	\right\}\text{when }t_{k-1} < t \le t_k
\end{equation*}
for $1\le k\le d$. Note that we must have $a_{k1}+b_{k1}(t_k-t_{k-1})=a_{k+1,1}$ and $a_{k2}+b_{k2}(t_k-t_{k-1})=a_{k+1,2}$ for $1\le k\le d-1$ to ensure the continuity of $u(t)$ and $\ell(t)$. We denote the process during time interval $\left(t_{k-1}, t_k\right]$ to be the $k$-th (integrated single) stage, which has a drift coefficient $\mu_k$, diffusion coefficient $\sigma_k$, and boundary parameters $\mathcal{B}_k=\left(a_{k 1}, b_{k 1}, a_{k 2}, b_{k 2}, T_k=t_k-t_{k-1}\right)$, for $1 \leq k \leq d$. By the Markov property, the NPD $q^{\text{int}}$ in equation \eqref{eqn: int} from the $(k-1)$-st stage becomes the initial position density $q_0$ of the $k$-th stage. Since the process could have hit the upper or lower boundary in an earlier stage, the NPD is a sub-probability density function, which is why we allowed $q_0$ to be a sub-probability density function in the definition of the integrated single-stage model. 

To compute the FPTDs and NPDs of a multi-stage model, we first divide the multi-stage model into several integrated single-stage models and then proceed sequentially: For the first stage, the initial position $X(0)$ has the same distribution as that of the considered multi-stage model (deterministic or random). Starting from the second stage, the NPD from the previous stage becomes the initial position distribution of the current stage. The FPTDs of each stage correspond exactly to the FPTDs of the original multi-stage model over the corresponding time interval. An illustration of the sequential computation is presented in \figref{fig: multi-stage-illustration}. In settings with the possibility of non-response before $T_{\text{end}}$, we can also compute the NPP $Q$ at $T_{\text{end}}$ by integrating the final NPD. Specifically, recalling that the density of $\tau$ on the upper and lower boundaries is denoted as $f_u(t)$ and $ f_{\ell}(t)$, respectively, then we can write the whole sequential algorithm as Algorithm \ref{alg: seq-multi}.  A formal derivation of the validity of the algorithm uses the Markov property and the Chapman-Kolmogorov equation with each single-stage model behaving like a sub-probability Markov transition kernel from one stage boundary to the next. We give a concrete example illustrating the complete computation process of applying Algorithm \ref{alg: seq-multi} to a multi-stage model in \figref{fig: aDDM_seq}.

\begin{figure*}[htb!]
	\centering
	\includegraphics[width=0.8\textwidth]{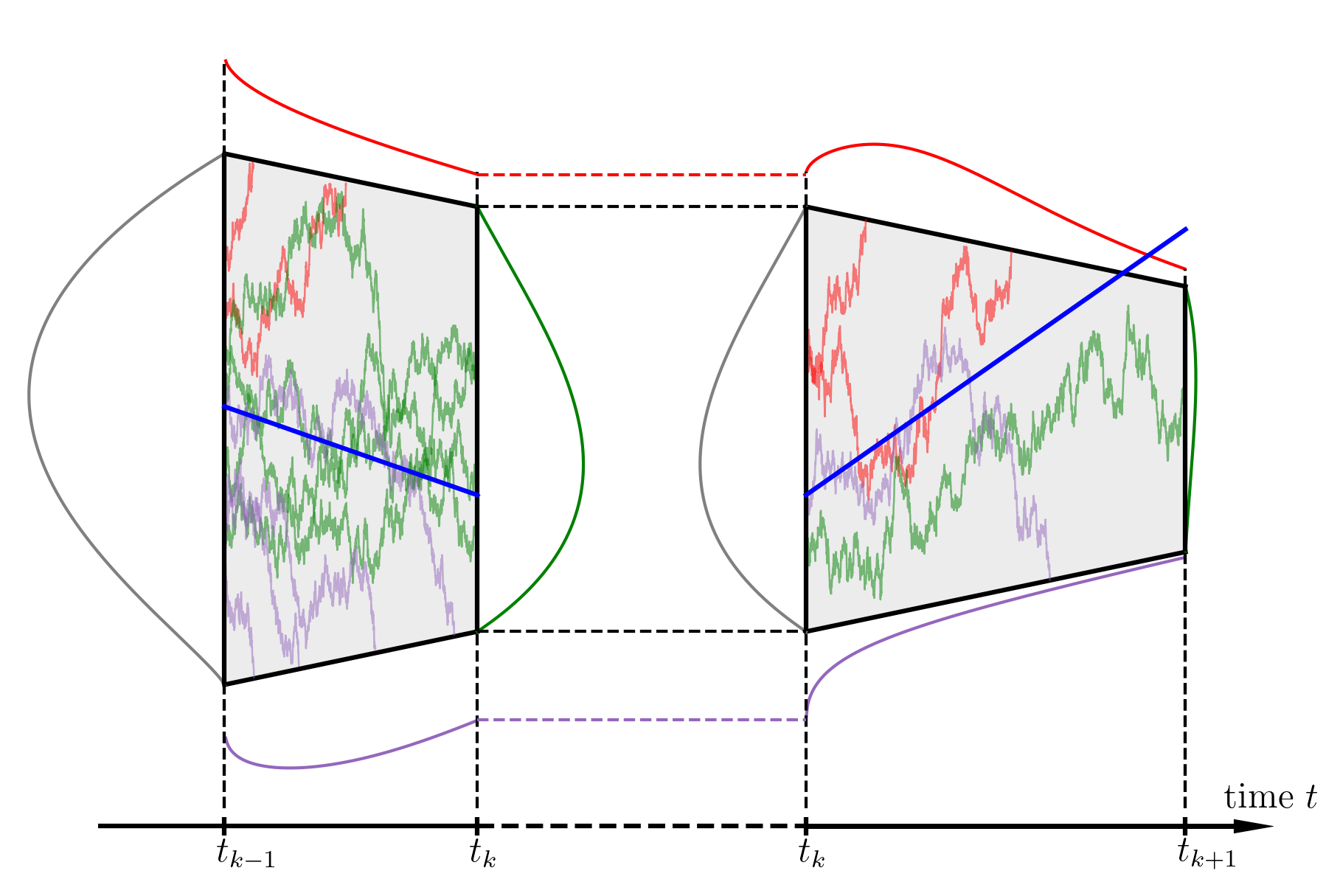}
	\caption{An illustration of the sequential computation from the $k$-th stage to the $(k+1)$-th stage in a multi-stage model. The $k$-th stage (depicted on the left side of the figure) is an integrated single-stage model, where the density of its starting position is given by $q_{k-1}(x)$ (represented by the grey curve). Within this stage, the functions $f_u(t)\mathbbm{1}_{(t_{k-1}, t_k]}(t)$ (red curve), $f_{\ell}(t)\mathbbm{1}_{(t_{k-1}, t_k]}(t)$ (purple curve), and $q_k(x)$ (green curve) are obtained by integrating the formulas \eqref{eqn: fptd_upper}, \eqref{eqn: fptd_lower}, and \eqref{eqn: fptd_vertical} over $q_{k-1}$, respectively. The resulting NPD $q_k(x)$ then serves as the starting position distribution for the $(k+1)$-th stage, facilitating the subsequent computation (depicted on the right side of the figure).
	}
	\label{fig: multi-stage-illustration}
\end{figure*}

\begin{algorithm*}[htb!]
	\SetKwInOut{Input}{Input}
	\SetKwInOut{Output}{Output}
	\caption{Sequential likelihood evaluation of a multi-stage GDDM}
	\label{alg: seq-multi}
	\Input{A multi-stage model configuration as described in Section \ref{sec: multi}. An observation $(t,c)$ for $0 < t \leq T_{\text{end}}$ and $c\in \{u, \ell\}$ or an observation of ``non-response''.}
	\Output{$f(t, c) = f_c(t)$ for a datum pair $(t,c)$, or $Q$ for ``non-response''}
	\BlankLine
	\For{$k = 1$ \KwTo $d$}{
		\begin{flalign*}
			&\left.\begin{aligned}
				f_u(t) &= f_u^{\text{int}}(t-t_{k-1};\mu_k, \sigma_k,\mathcal{B}_k, q_{k-1}) \\
				f_\ell(t) &= f_\ell^{\text{int}}(t-t_{k-1};\mu_k, \sigma_k,\mathcal{B}_k, q_{k-1}) \\
			\end{aligned}
			\right\}\text{ for } t\in(t_{k-1}, t_k]&\\
			&\, 
			q_k(x) = q^{\text{int}}(x;\mu_k, \sigma_k,\mathcal{B}_k, q_{k-1}) \text{ for all } x\in(\ell(t_k), u(t_k))&
		\end{flalign*}
	}
	\BlankLine
	$Q=\int_{\ell(t_d)}^{u(t_d)}q_d(x)\diff x$
\end{algorithm*}

For multi-stage models, Algorithm \ref{alg: seq-multi} is exact in the sense that the only error is the numerical error. We note, however, that equations \eqref{eqn: fptd_upper-basic} and \eqref{eqn: fptd_lower-basic} are singular at $t=0$, which can lead to numerical overflow/underflow in Algorithm \ref{alg: seq-multi} if the observation time $t\in(t_{k-1},t_k]$ is very close to $t_{k-1}$. In practical scenarios, this issue can usually be avoided via data post-processing that slightly inflates the value of any $t$ that happens too close to the beginning of a stage. This will have negligible impact on parameter estimation, particularly in cognitive scientific applications where any numerical corrections are going to lie well within the measurement error of participant reaction times.

\begin{figure*}[htb!]
	\centering
	\includegraphics[width=\textwidth]{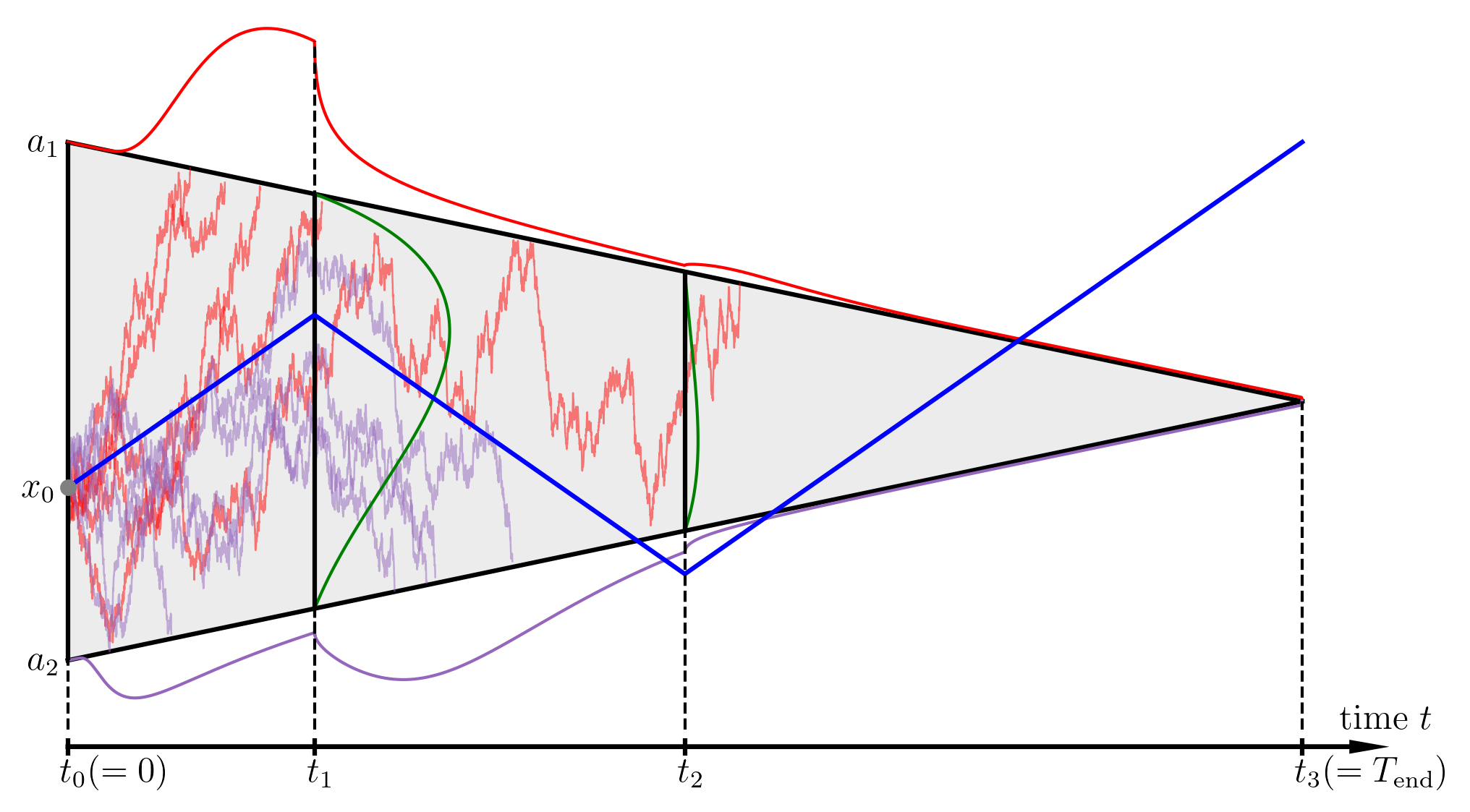}
	\caption{Example of applying the multi-stage algorithm to a GDDM with drift term $\mu_1\mathbbm{1}_{[0,t_1)\cup[t_2, \infty)}(t)+\mu_2\mathbbm{1}_{[t_1, t_2)}(t)$, constant $\sigma$ and linearly collapsing boundaries that collapse at $t_3$. The process is divided into 3 single-stage models according to $t\in[0, t_1),[t_1, t_2), [t_2, t_3)$. The FPTDs can be computed sequentially from the early to the late stages, and are plotted along their respective boundaries. Example trajectories are shown with different colors representing which boundary the trajectory first hits (red indicates upper boundary, purple indicates lower).}
	\label{fig: aDDM_seq}
\end{figure*}

\subsection{Approximating with the Multi-stage Model} \label{sec: approx}

For GDDMs of the same form as the multi-stage model, but with boundaries that are not piecewise linear, it is natural to consider approximating the GDDM with a multi-stage model by approximating the boundaries $u(t)$ and $\ell(t)$ with continuous piecewise linear functions. Choosing time points $0=t_0 < t_1 < \dotsb < t_d = T_{\text{end}}$ that include any times when $\mu(t)$ or $\sigma(t)$ changes values, we can define $\overline u(t)$ and $\overline \ell(t)$ to be the continuous piecewise linear functions that are linear on each interval $(t_{k-1},t_k]$ and for which $\overline u(t_k)=u(t_k)$ and $\overline \ell(t_k) = \ell(t_k)$ for each $k$. The original process with these new boundaries gives a multi-stage model that approximates the original GDDM. The FPTDs and NPDs of the approximating multi-stage model can be computed with Algorithm \ref{alg: seq-multi} and used as approximations of the FPTDs and NPDs for the original GDDM. 

Define $\overline\tau=\inf\{t > 0: X(t)\not\in(\overline\ell(t),\overline u(t))\}$ to be the hitting time of the process to the approximating boundaries. As the approximation gets increasingly fine, meaning $\max_k |t_k-t_{k-1}| \to 0$, then it is straightforward to show that $\overline\tau\to\tau$ with probability one under mild regularity conditions on the boundaries (see the proof in Appendix \ref{sec: approx_proof}). Showing that the FPTDs also converge is beyond the scope of this paper and presumably requires additional regularity assumptions on the boundaries, such as continuous differentiability. We demonstrate in Sections \ref{sec: example2} and \ref{sec: example3} that this approximation method can perform remarkably well in practice.

\subsection{Transformations}\label{sec: generalization}

For GDDMs that cannot be directly approximated with a multi-stage model as in Section \ref{sec: approx}, a transformation into an equivalent GDDM that is amenable to approximation may be possible.

Consider a transform $(t, x)\mapsto (s, w)$ defined as
\begin{equation}\label{eqn: transform}
	\begin{aligned} s & =\gamma(t),\\
    w & =\psi(x, t)\end{aligned}
\end{equation}
for $t\geq 0$, $x\in\mathbb{R}$, where $\gamma:[0,\infty)\to[0,D)$ is one-to-one, increasing, and differentiable with derivative $\gamma'$ for some $0 < D \leq \infty$, $\psi:\mathbb{R}{\times}[0,\infty)\to\mathbb{R}$ is continuous, and for each $t\geq 0$, $\psi(\cdot,t):\mathbb{R}\to\mathbb{R}$ is one-to-one, increasing, and differentiable with derivative $\psi'(\cdot,t)$ and inverse $\psi^{-1}(\cdot,t)$. The GDDM in the new coordinate system is
\begin{align*}
        \widetilde{X}(s)&=\psi(X(\gamma^{-1}(s)), \gamma^{-1}(s)) , \\
		\widetilde{u}(s)&=\psi(u(\gamma^{-1}(s)), \gamma^{-1}(s)) , \\
		\widetilde{\ell}(s)&=\psi(\ell(\gamma^{-1}(s)), \gamma^{-1}(s)) , \\
		\widetilde{\tau}&=\inf\{s>0: \widetilde{X}(s)\notin (\widetilde{\ell}(s), \widetilde{u}(s))\} , \\
		\widetilde{C}&= \text{exiting boundary} \in\{u,\ell\} 
\end{align*}
with the distribution of $\widetilde{X}(0)=\psi(X(0),0)$ defined by the distribution of $X(0)$. If $X(0)$ has density $q_0$, then $\widetilde{X}(0)$ has density $\widetilde{q}_0(x)=q_0(\psi^{-1}(x,0))/\psi'(\psi^{-1}(x,0))$.

The assumptions on $\gamma$ and $\psi$ ensure that $\widetilde{u}$ and $\widetilde{\ell}$ are continuous and that
\begin{equation*}
	\begin{aligned}
		\widetilde{\tau}&=\gamma(\tau) \quad \quad \text{and} \quad \quad \widetilde{C}=C .
	\end{aligned}
\end{equation*}
Hence, the FPTDs $f_c$ of $X$ are related to the FPTDs $\widetilde{f}_c$ of $\widetilde{X}$ via
\begin{equation}\label{eqn: relation}
\begin{aligned}
f_c(t)=&\frac{\diff}{\diff t}\mathbb{P}(\tau\le t,C=c)\\
=&\frac{\diff}{\diff t}\mathbb{P}(\widetilde{\tau}\le \gamma(t),\widetilde{C}=c)\\
=&\gamma'(t) \widetilde{f}_c(\gamma(t)) 
\end{aligned}
\end{equation}
for $c\in\{u,\ell\}$, 
and the NPD $q$ of $X$ at $T$ is related to the NPD $\widetilde{q}$ of $\widetilde{X}$ at $\gamma(T)$ via
\begin{equation} \label{eqn: relation-NPD}
	\begin{aligned}
		q(x) & = \frac{\diff}{\diff x}\mathbb{P}(X(T)\leq x,\tau > T)\\ &= \frac{\diff}{\diff x}\mathbb{P}(\widetilde{X}(\gamma(T))\leq \psi(x,T),\widetilde{\tau} > \gamma(T)) \\
		& = \psi'(x,T)\widetilde{q}(\psi(x,T)) .
	\end{aligned}
\end{equation}
The NPP $Q$ of $X$ at $T$ is simply the NPP $\widetilde{Q}$ of $\widetilde{X}$ at $\gamma(T)$.
Consequently, any method for approximating the FPTDs and NPDs of the transformed GDDM $\widetilde{X}$ can be used to approximate the FPTDs and NPDs of the original GDDM $X$ using equations \eqref{eqn: relation} and \eqref{eqn: relation-NPD}. The trick is to find a transformation for which the resulting $\widetilde{X}$ gives a GDDM that is amenable to approximation.

A particularly simple $\widetilde{X}$ is standard Brownian motion, perhaps with a random starting point, i.e., equation \eqref{eqn: sde} with $\mu(x,t)\equiv 0$ and $\sigma(x,t)\equiv 1$. A wide range of commonly used SDEs can be transformed into standard Brownian motion in this way, including geometric Brownian motion, the Ornstein-Uhlenbeck (O-U) process, and the Brownian bridge, among others. The question of whether a diffusion process can be transformed to a Brownian motion through transformation of the type in equation \eqref{eqn: transform} was raised by Kolmogorov in \cite{kolmogoroff1931analytischen}. Cherkasov gave a constructive proof of the necessary and sufficient condition for the existence of such a transformation in \cite{cherkasov1957transformation}, which is now commonly referred to as the Cherkasov condition in the literature. This was recast into a simpler form in \cite{ricciardi1976transformation}, and was later used in \cite{ricciardi1983note} to obtain the one-sided first passage time density through the integral equation method. We present the Cherkasov condition from \cite{ricciardi1976transformation} in Appendix \ref{appendix: cherkasov}, and instead of focusing on the condition itself, we provide specific examples of the transformation from diffusion processes to Brownian motion that are common in practical applications.

    \textbf{DDMs with nonlinear drift.}
	One simple but important subclass of models that fits this transformation is
	\begin{equation}\label{eqn: general_ddm}
		\diff X(t)=\mu(t) \diff t +\sigma \diff W(t)
	\end{equation}
	with possibly nonlinear time-dependent drift term $\mu(t)$. The SDE \eqref{eqn: general_ddm} can be solved via a direct integration:
	\begin{equation*}
		X(t)=X(0)+\int_0^t \mu(r) \diff r +\sigma W(t)
	\end{equation*}
	and consequently, through transformation $w=\psi(x,t)=\frac{1}{\sigma}(x-\int_0^t \mu(r)\diff r)$ (in this case $s=\gamma(t)=t$, so we still use $t$) we have
	\begin{equation}\label{eqn: transformed_ddm}
		\widetilde{X}(t) = \psi(X(t), t)=\frac{X(0)}{\sigma} + W(t) .
	\end{equation}
	The right hand side of \eqref{eqn: transformed_ddm} is a Brownian motion with initial position $X(0)/\sigma$. We illustrate the approximate FPTD computation for this example in Section \ref{sec: example2}.
	
    \textbf{Ornstein-Uhlenbeck model.} The nonstationary Ornstein-Uhlenbeck (O-U) process $X(t)$ is defined by the following stochastic differential equation:
	\begin{equation}\label{eqn: ou}
		\diff X(t)=\theta\left(\lambda-X(t)\right) \diff t+\sigma \diff W(t)
	\end{equation}
	where $\theta\neq 0, \lambda\in\mathbb{R}, \sigma>0$ are constants. When $\theta > 0$ it models the ``mean-reverting" phenomenon as the process tends towards $\lambda$ over time, and it will converge to its stationary distribution $\mathcal{N}(\lambda, \frac{\sigma^2}{2 \theta})$ as $t$ goes to infinity. This process is used, for instance, in physics to model the velocity of a particle under friction, in finance to capture mean-reverting behavior of interest rates and volatility \citep{gardiner2009stochastic}, and in cognitive science to model evidence accumulation and neural dynamics \citep{usher2001time,Brunton2013}.

	The solution of \eqref{eqn: ou} is given by \cite{gardiner2009stochastic}
	\begin{equation*}
		X(t)=\lambda+e^{-\theta t}(X(0)-\lambda)+\sigma e^{-\theta t}\int_0^t e^{\theta r} \diff W(r) . 
	\end{equation*}
	The stochastic integral
	can be viewed as a time-changed Brownian motion \citep{doob1942brownian}:
	\begin{equation}\label{eqn: view_time_changed}
		X(t)=\lambda+e^{-\theta t}(X(0)-\lambda)+\sigma e^{-\theta t}\widetilde{W}\Big(\frac{e^{2\theta t}-1}{2\theta}\Big) ,
	\end{equation}
	where 
    \begin{equation*}
    \widetilde{W}(s)=\int_0^{\frac{1}{2\theta} \log (1+2\theta s) }e^{\theta r}\diff W(r)
    \end{equation*}
    is another standard Brownian motion defined by the above stochastic integral.
	We perform the transformation $s=\gamma(t)=(e^{2\theta t}-1)/(2\theta)$ and $w=\psi(x, t)=(e^{\theta t} x-\lambda e^{\theta t}+\lambda)/\sigma$
	so that the transformed process becomes
	\begin{equation*}
		\widetilde{X}(s) = \psi(X(\gamma^{-1}(s)), \gamma^{-1}(s))=\frac{X(0)}{\sigma}+\widetilde{W}(s)
	\end{equation*}
	which is a Brownian motion with initial position $X(0)/\sigma$.

	In Section \ref{sec: example3}, we use this transformation and adopt the procedures in this section to compute the FPTDs of the O-U process.
Additionally, for parametric models like the O-U process, we can also study their ``multi-stage" versions with piecewise constant parameters. Our Algorithm \ref{alg: seq-multi} naturally admits generalization to these models, where the stage breakpoints comprise both the original parameter breakpoints and the segmentation points introduced to approximate the nonlinear transformed boundaries.

Even if the original model has simple boundaries, the transformed model could have complicated boundaries that make approximations based on the multi-stage model less accurate or risk numerical overflow/underflow if an interpolation knot point $t_k$ is placed too close to a reaction time of interest. For instance, the time transformation $s=\gamma(t)$ in the O-U process is exponential and might lead to time dilation issues if one is interested in the FPTDs at large decision times.

\subsection{Algorithm Summary and Implementation} \label{sec: summary}

The complete process of approximating $f(t,c)$ and/or $Q$ is as follows:
\begin{itemize}
	\item If necessary, use the methods in Section \ref{sec: generalization} to mathematically transform the GDDM into a representation with constant or piecewise constant $\mu(t)$ and $\sigma(t)$.
	\item If necessary, use the strategy in Section \ref{sec: approx} to approximate the upper and lower boundaries with piecewise linear functions, so that the approximate model is a multi-stage model.
	\item Use Algorithm \ref{alg: seq-multi} in Section \ref{sec: multi} to calculate the FPTD or the NPP of the multi-stage model.
	\item If necessary, use the methods of Section \ref{sec: generalization} to transform the resulting FPTD from the multi-stage model back into the FPTD of the original GDDM.
\end{itemize}

In our implementation, all numerical integrations in equation \eqref{eqn: int} as used by Algorithm \ref{alg: seq-multi} are performed with Gauss-Legendre quadrature (see Chapter 5.5 of \cite{sauer2011numerical} for an introduction). This approach offers higher accuracy with fewer integrand evaluations compared to alternative methods, such as the trapezoidal rule. We specifically benefit from this in our method because our integrands consist of (truncated) infinite series, which are expensive to evaluate. It also means that each $q_k$ within the algorithm needs only to be represented at a small number of predetermined quadrature points. The choice of quadrature order provides a natural trade-off between speed and accuracy: higher orders increase the number of quadrature points and hence the computational cost, but also yield greater precision. Details of the implementation of Algorithm \ref{alg: seq-multi} 
	regarding quadrature are in Appendix \ref{appendix: quad}.

There are several other important practical aspects of Algorithm \ref{alg: seq-multi} that contribute to its efficiency, which we discuss in detail in Appendix \ref{sec: implementation}. In our implementation, the core functionality is written in Python, while performance-critical components are optimized using Cython \citep{behnel2011cython}, leveraging both the flexibility of Python and the high efficiency of C. OpenMP parallelization is supported to compute likelihoods for different trials independently, thereby accelerating the computation on large datasets. Our code, along with usage examples, including the numerical examples presented in Section \ref{sec: examples}, will be released as an open-source Python package upon publication for reproducibility. Additionally, our code will be integrated into the HSSM package \citep{hssm2025} for hierarchical Bayesian inference of GDDMs.

\section{Numerical Examples}\label{sec: examples}

In this section, we provide several numerical examples to demonstrate the effectiveness of our method. In Examples \ref{sec: example1}, \ref{sec: example2}, and \ref{sec: example3}, we compute exact or approximate FPTDs using Algorithm \ref{alg: seq-multi} and verify the results by comparing them with simulated data. In Section \ref{sec: stat_addm}, we demonstrate the efficiency of Algorithm \ref{alg: seq-multi} for likelihood computation, highlighting its speed relative to previous methods and its applicability to statistical inference.

It is worth mentioning that accurately simulating first passage times of stochastic processes is itself a challenging problem (see \cite{tuerlinckx2001comparison} for a discussion regarding Brownian motion). The usual approach, which we have also adopted, is to discretize the SDE \eqref{eqn: general_ddm} according to the Euler-Maruyama scheme: let $0<r_1<r_2<\cdots$ be a temporal grid, simulate $\widehat{X}(0)$ according to the distribution of $X(0)$, then proceed as
\begin{equation}\label{eqn: euler-maruyama}
\begin{aligned}
&\widehat{X}(r_{n+1})=\widehat{X}(r_n)+\mu(\widehat{X}(r_n), r_n)(r_{n+1}-r_n)+ \sigma(\widehat{X}(r_n), r_n) \Delta W(r_n)
\end{aligned}
\end{equation}
to simulate an approximate sample path of $X$, where the random variable $\Delta W(r_n)= W(r_{n+1})-W(r_n)$ is distributed as $\mathcal{N}(0, r_{n+1}-r_n)$, and all $\Delta W(r_n)$'s are independent. Such a simulation will, of course, be biased due to the unavoidable discretization error. Then, the first passage time data is the first time $r_n$ such that $\widehat{X}(r_n)\notin (\ell(r_n), u(r_n))$. However, even if we could simulate $X$ exactly on the discrete grid $(r_n)_{n=1}^\infty$, this would not sample $\tau$ exactly, as it is possible for $X$ to cross the barrier between grid points $r_n$ and $r_{n+1}$ without exceeding the barrier at either of these grid points. 

In our experiments, we simply use a very small step size $\Delta r=r_{n+1}-r_n=10^{-5}$ and a large sample size to reduce the approximation error. More advanced techniques, such as Brownian bridge interpolation, are also commonly used to address this issue (see chapters 6-7 of \cite{glasserman2004monte} for a general introduction), however for the purposes of this work, the simulation data serves as a sanity-check more than an object of independent interest.

\subsection{FPTDs for GDDMs with Piecewise Constant Drift}\label{sec: example1}
We consider a GDDM with piecewise constant drift
\begin{equation*}
	\mu(t)=\begin{cases}
		1 &0< t\le 1\\-0.2 &1< t\le 2.5 \\1.5 &2.5< t\le 3.5\\0.5 &3.5< t\le 4\\-1 & 4< t\le 5
	\end{cases}
\end{equation*}
constant diffusion $\sigma=1$ and linear collapsing boundaries:
\begin{equation*}
	u(t)=1.5-0.3t,\quad \ell(t)=-1.5+0.3t
\end{equation*}
starting at $x_0=-0.5$. This is a generic example of a multi-stage model with the same drift terms for all trials. 

We use our Algorithm \ref{alg: seq-multi} to compute the FPTDs on the upper boundary and lower boundary and compare them with the normalized histogram obtained from 50,000 Monte Carlo samples. 
The results are shown in \figref{fig: addm_llhd}. The FPTD exhibits corners at all breakpoints of $\mu(t)$. Despite this complexity, our result visually shows excellent agreement with the empirical densities.
\begin{figure*}[htb!]
	\centering
	\includegraphics[width=0.8\textwidth]{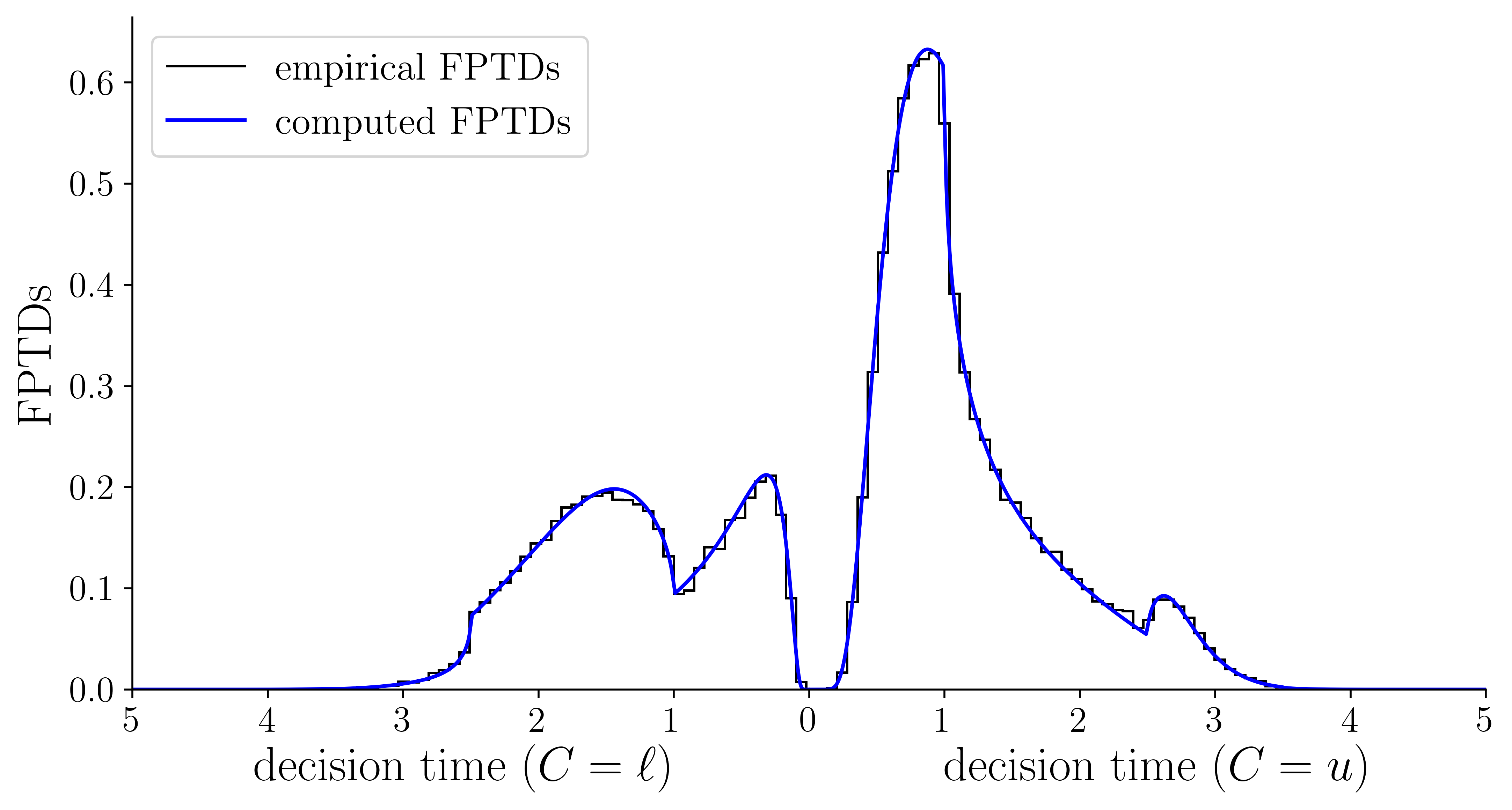}
	\caption{Plots of FPTDs and histogram for Example \ref{sec: example1}: a GDDM with piecewise constant drift. First passage times on the upper boundary are displayed on the positive $t$-axis and those on the lower boundary are displayed on the negative $t$-axis so that the whole plot is a valid probability density function.}
	\label{fig: addm_llhd}
\end{figure*}

\subsection{FPTDs for GDDMs with Nonlinear Drift and Boundaries}\label{sec: example2}
Consider the GDDMs in equation \eqref{eqn: general_ddm} with possibly nonlinear time-dependent drift term $\mu(t)$ and boundaries $u(t), \ell(t)$. Following the transform discussed in Section \ref{sec: generalization}, we can write the transformed boundaries as
\begin{equation*}
	\begin{aligned}
		\widetilde{u}(t)&=\frac{1}{\sigma}\Big(u(t)-\int_0^t \mu(r) \diff r\Big),\\
		\widetilde{\ell}(t)&=\frac{1}{\sigma}\Big(\ell(t)-\int_0^t \mu(r) \diff r\Big)
	\end{aligned}
\end{equation*}
and work with the transformed model
$\widetilde{X}(t)=\frac{X(0)}{\sigma} + W(t)$.
We approximate $\widetilde{u}(t)$ and $\widetilde{\ell}(t)$ by piecewise linear functions and use Algorithm \ref{alg: seq-multi} to evaluate the FPTD of the approximate multi-stage model. This can apply to any nonlinear drift and boundaries.

We show this case via a GDDM with a sinusoidal drift
\begin{equation*}
	\mu(t)=\frac{1}{2}\sin t,
\end{equation*}
constant diffusion $\sigma=1$, and boundaries
\begin{equation*}
	u(t)=2e^{-\left(t/5\right)^3},\quad \ell(t)=-2e^{-\left(t/5\right)^3}
\end{equation*}
starting at $x_0=-0.5$. The upper boundary $u(t)$ is the survival function of a Weibull$(5,3)$ distribution, and $\ell(t)$ is its reflection about the $t$-axis. The choice of a Weibull distribution is guided by relevant examples in prior work \citep{fengler2021likelihood,GingJehli2025}.

The original boundaries, and the transformed boundaries with their piecewise linear approximations are shown in \figref{fig: weibull_llhd}(a) and \figref{fig: weibull_llhd}(b), respectively. The computed approximate FPTDs and the histogram consisting of 50,000 samples are displayed in \figref{fig: weibull_llhd}(c). The approximate FPTD and the histogram agree well.

\begin{figure*}[htb!]
	\centering
	\includegraphics[width=\textwidth]{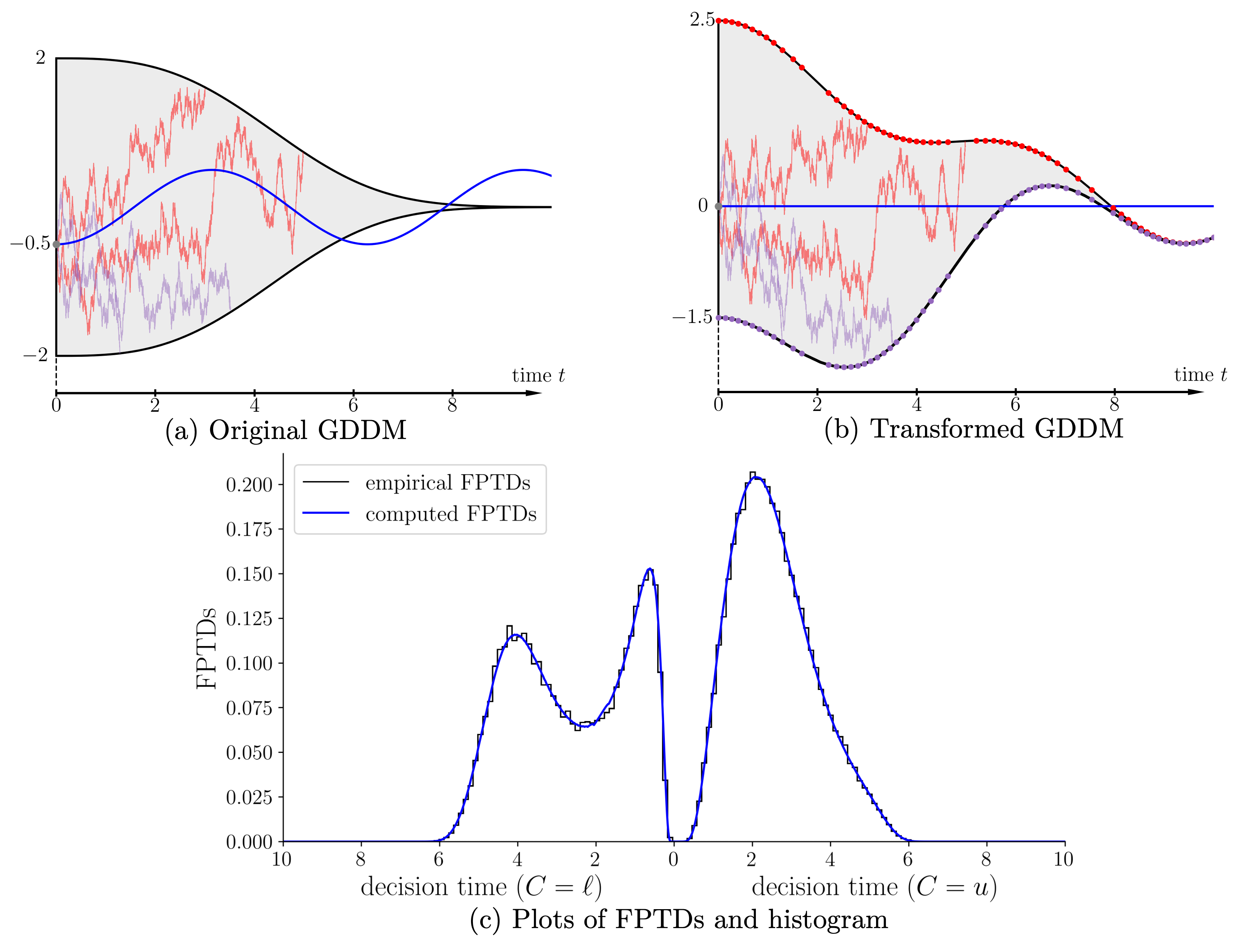}
\caption{Example \ref{sec: example2}: a GDDM with sinusoidal drift and Weibull survival boundaries. (a) and (b) illustrate the GDDM before and after transformation. The expected value of the process (blue curves) is transformed from a sinusoidal function to a constant zero. In (b), red dots on the upper boundary and purple dots on the lower boundary mark the knot points used to form piecewise linear boundary approximations. These knots are selected via adaptive linear interpolation (see Appendix \ref{sec: implementation} for a detailed description). (c) displays first passage times: those on the upper boundary appear on the positive $t$-axis, while those on the lower boundary appear on the negative $t$-axis, forming a valid probability density function.}
	\label{fig: weibull_llhd}
\end{figure*}

To quantitatively justify the approximation, we conduct a one-sample Kolmogorov-Smirnov (KS) test. The null hypothesis $H_0$ is that the sampled first passage time data are distributed according to the computed approximate FPTDs. As shown in Table \ref{tab: ks-test1}, the KS test does not suggest rejecting $H_0$ even for sample sizes as large as 100,000. Even though our method of simulating first passage times introduces discretization errors, we believe that these results demonstrate the high accuracy of the FPTD approximation.

\begin{table}[htb!]
	\centering
	\label{tab:ks-test-combined}
    \caption{Results of the one-sample Kolmogorov-Smirnov tests for different data sizes.}
		\centering
		\begin{tabular*}{\columnwidth}{@{\hspace{\tabcolsep}\extracolsep{\fill}}ccccc@{\hspace{\tabcolsep}}}
			\toprule
			\textbf{\#Data} &$10000$& $20000$& $50000$&$100000$\\
			\midrule
			\textbf{$p$-value} & $0.778$&$0.703$& $0.498$& $0.499$ \\
			\bottomrule
		\end{tabular*}
		\subcaption{Example \ref{sec: example2}: a GDDM with sinusoidal drift and Weibull survival boundaries}
		\label{tab: ks-test1}
	
	\vspace{1em} 
	
		\centering
		\begin{tabular*}{\columnwidth}{@{\hspace{\tabcolsep}\extracolsep{\fill}}ccccc@{\hspace{\tabcolsep}}}
			\toprule
			\textbf{\#Data} &$10000$& $20000$& $50000$&$100000$\\
			\midrule
			\textbf{$p$-value} & $0.449$&$0.473$& $0.179$& $0.357$ \\
			\bottomrule
		\end{tabular*}
		\subcaption{Example \ref{sec: example3}: an Ornstein-Uhlenbeck model}
		\label{tab: ks-test2}
\end{table}

\subsection{FPTDs for Ornstein-Uhlenbeck Models}\label{sec: example3}
Consider the Ornstein-Uhlenbeck model in equation \eqref{eqn: ou} with boundaries $u(t)$ and $\ell(t)$. According to the transformation given in Section \ref{sec: generalization}, the transformed boundaries are given by
	\begin{align*}
		\widetilde{u}(s)=&\tfrac{1}{\sigma }\Big(u\big(\tfrac{1}{2\theta}\log(1+2\theta s)\big)\sqrt{1+2\theta s}-\lambda\sqrt{1+2\theta s}+\lambda\Big),\\
		\widetilde{\ell}(s)=&\tfrac{1}{\sigma }\Big(\ell\big(\tfrac{1}{2\theta}\log(1+2\theta s)\big)\sqrt{1+2\theta s}-\lambda\sqrt{1+2\theta s}+\lambda\Big)
	\end{align*}
We illustrate this case with an example that uses random initial condition $X(0)\sim\text{Beta}(10, 25)$. We set the model parameters as $\theta=1,\lambda=1.5,\sigma=2$, and the original boundaries to be
\begin{equation*}
	u(t)=2e^{-\left(t/2\right)^3},\quad \ell(t)=-2e^{-\left(t/2\right)^3}
\end{equation*}
We follow the procedures described in Section \ref{sec: summary}: interpolate the transformed boundaries sufficiently fine, compute the approximate transformed FPTDs by Algorithm \ref{alg: seq-multi}, and obtain the original FPTDs by relation \eqref{eqn: relation}. Plots of the FPTDs and statistical test results are shown in \figref{fig: ou_llhd} and Table \ref{tab: ks-test2}, respectively. From the results we conclude that our method effectively computes the FPTDs of the O-U model \eqref{eqn: ou}.

\begin{figure*}[htb!]
	\centering
	\includegraphics[width=0.8\textwidth]{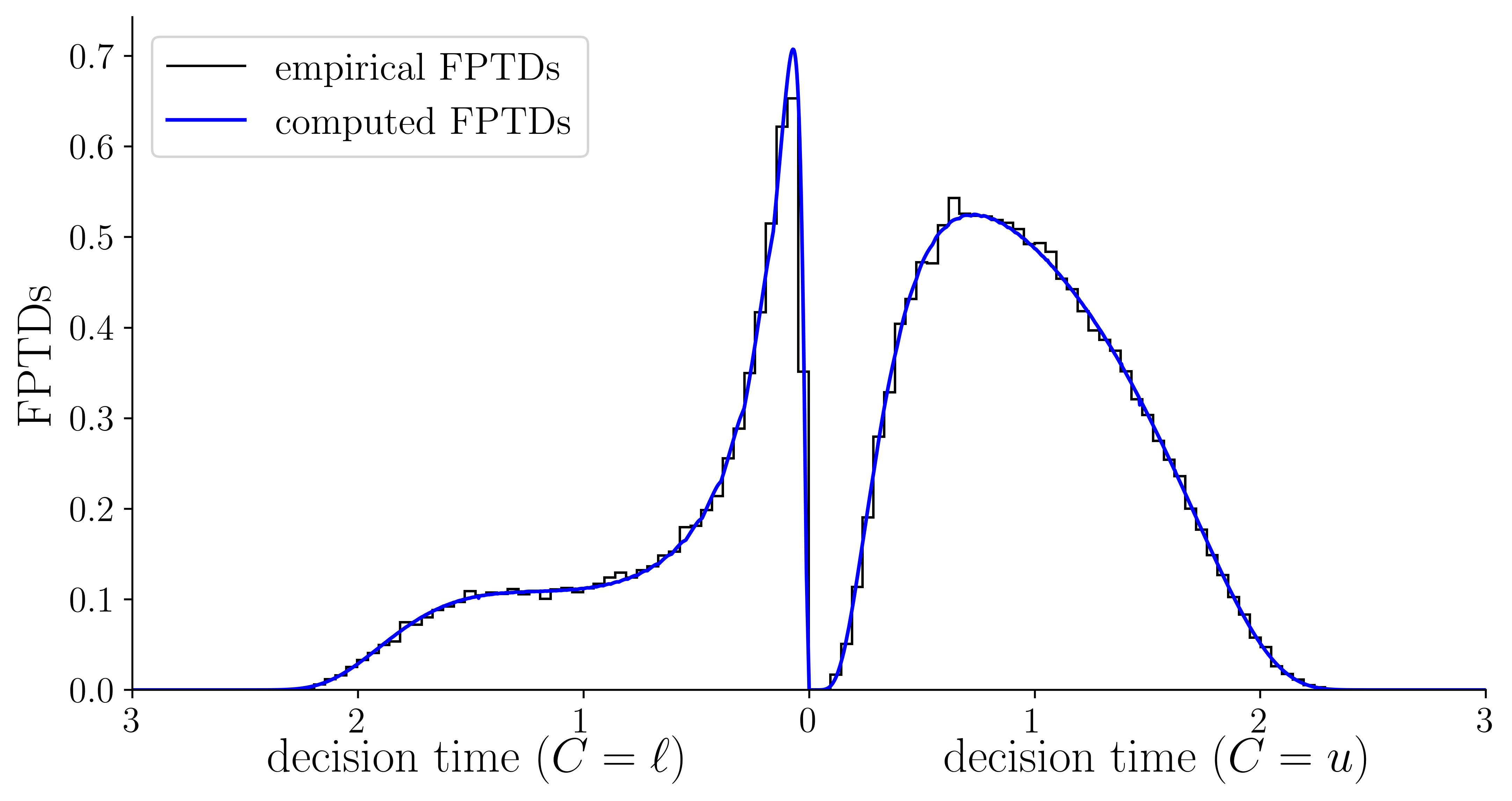}
	\caption{Plots of FPTDs and histogram for Example \ref{sec: example3}: an Ornstein-Uhlenbeck model. First passage times on the upper boundary are displayed on the positive $t$-axis and those on the lower boundary are displayed on the negative $t$-axis so that the whole plot is a valid probability density function.}
	\label{fig: ou_llhd}
\end{figure*}

\subsection{Practical Statistical Computing for aDDMs}\label{sec: stat_addm}
In the following examples, we show the effectiveness and efficiency of our method for computing the likelihood and inferring parameters for aDDMs. The aDDM incorporates visual attention as a key factor underlying the decision process behind simple choice behavior. It adjusts the drift rate $\mu(t)$ based on instantaneous attention within a given experimental trial, adopting different values depending on whether the participant fixates on choice options A or B at time $t$. Let $V(t)$ be the process taking values in $\{\mathrm{A}, \mathrm{B}\}$ representing the participant's visual fixation at time $t$, and let $r_A$ and $r_B$ denote the participant's numerical ratings of their preferences for options A and B, respectively. The aDDM corresponds to the following SDE (conditioned on $V, r_A, r_B$) \citep{krajbich2010visual},
\begin{equation}\label{eqn: addm}
	\begin{aligned}
		&\diff X(t)=A(V(t), r_A, r_B) \diff t+\sigma \diff W(t)
	\end{aligned}
\end{equation}
where
\begin{equation*}
A(v, r_A, r_B)=\begin{cases}
			\kappa(r_A-\eta r_B) & \text{if }v=\mathrm{A} ,\\
			\kappa(\eta r_A-r_B) & \text{if }v=\mathrm{B} ,\\
		\end{cases}
\end{equation*}
$\eta\in(0, 1)$ is a parameter that reflects the discounting of the non-fixated option, and $\kappa>0$ parametrizes the overall magnitude of the drift. Note that $\mu(t)=A(V(t), r_A, r_B)$ is a piecewise constant function as in the multi-stage model of Section \ref{sec: multi}, but that it will generally be different on each experimental trial, both because a participant's visual fixation process will generally be different on each trial and because the options presented to the participant can change across trials.

In an experimental setting, a participant's fixation trajectory $V(t)$ is captured using an eye-tracker, while the participant's preference ratings $r_A$ and $r_B$ are collected prior to the choice experiment; together, these can be viewed as known covariates. Because those covariates drive time-varying drift functions, the configuration $\boldsymbol{\Xi}$ of the GDDM will be different on each experimental trial. Letting $\boldsymbol{\Xi}_i$ denote the configuration on the $i$-th trial, then the likelihood of $\boldsymbol{\theta}$ is 
\begin{equation} \label{eqn: likeADDM}
	\mathcal{L}_{\text{aDDM}}(\boldsymbol{\theta}\mid((\tau_i, C_i))_{i=1}^n)=\prod_{i=1}^n f(\tau_i, C_i\mid\boldsymbol{\Xi}_i) .
\end{equation}
$\boldsymbol{\Xi}_i$ depends on unknown parameters $\boldsymbol{\theta}$ (e.g. $\eta$ and $\kappa$) that are shared by all trials, as well as on the trial-specific known visual fixation trajectory $V_i(t)$ and ratings $r_{A, i}, r_{B, i}$ for the subject's $i$-th trial.

Here, we simulate a collection of $n=50,000$ visual fixation trajectories $(V_i(t))_{i=1}^n$, preference ratings $((r_{A,i}, r_{B, i}))_{i=1}^n$ and the corresponding dataset of choices and reaction times $((\tau_i, C_i))_{i=1}^n$ from a prescribed aDDM model with drift parameters $\kappa, \eta$, constant diffusion $\sigma=1$ and symmetric linear collapsing boundaries 
\begin{equation*}
	u(t)=a-bt,\quad \ell(t)=-a+bt
\end{equation*} starting at $x_0 \in (-a,a)$. The fixation trajectory $V_i(t)$ on each trial was sampled from a gamma process to mimic the switching of attention in psychological experiments, and the ratings $r_{A, i}, r_{B, i}$ are sampled from a discrete uniform distribution on $\{1,2,3,4,5\}$. The average number of fixations per trial in our simulation is 3.45. 

\subsubsection{Likelihoods for aDDM given a Dataset}\label{sec: timing}
We evaluate the performance of Algorithm \ref{alg: seq-multi} for computing likelihoods of the parameters at their true values given the dataset $((\tau_i, C_i))_{i=1}^n$. We compare the speed of our methods with PyDDM \citep{shinn2020flexible}, a prominent and highly optimized library which we use as a benchmark on speed for the KFE approach (Section \ref{sec: related_work}). When using PyDDM, we can choose the finite difference discretization parameters $\Delta t$ and $\Delta x$ which determine the fineness of the temporal and spatial meshes, and as $\Delta t\rightarrow 0, \Delta x\rightarrow 0$, the numerical solution converges to the true solution of the underlying KFE governing the diffusion process. We test Algorithm \ref{alg: seq-multi} with varying quadrature orders and compare its speed–accuracy trade-off against the KFE method evaluated with different choices of $\Delta t$ and $\Delta x$. The results are shown in Figure \ref{fig: speed_accu}.

\begin{figure*}[htb!]
	\centering
	\includegraphics[width=\textwidth]{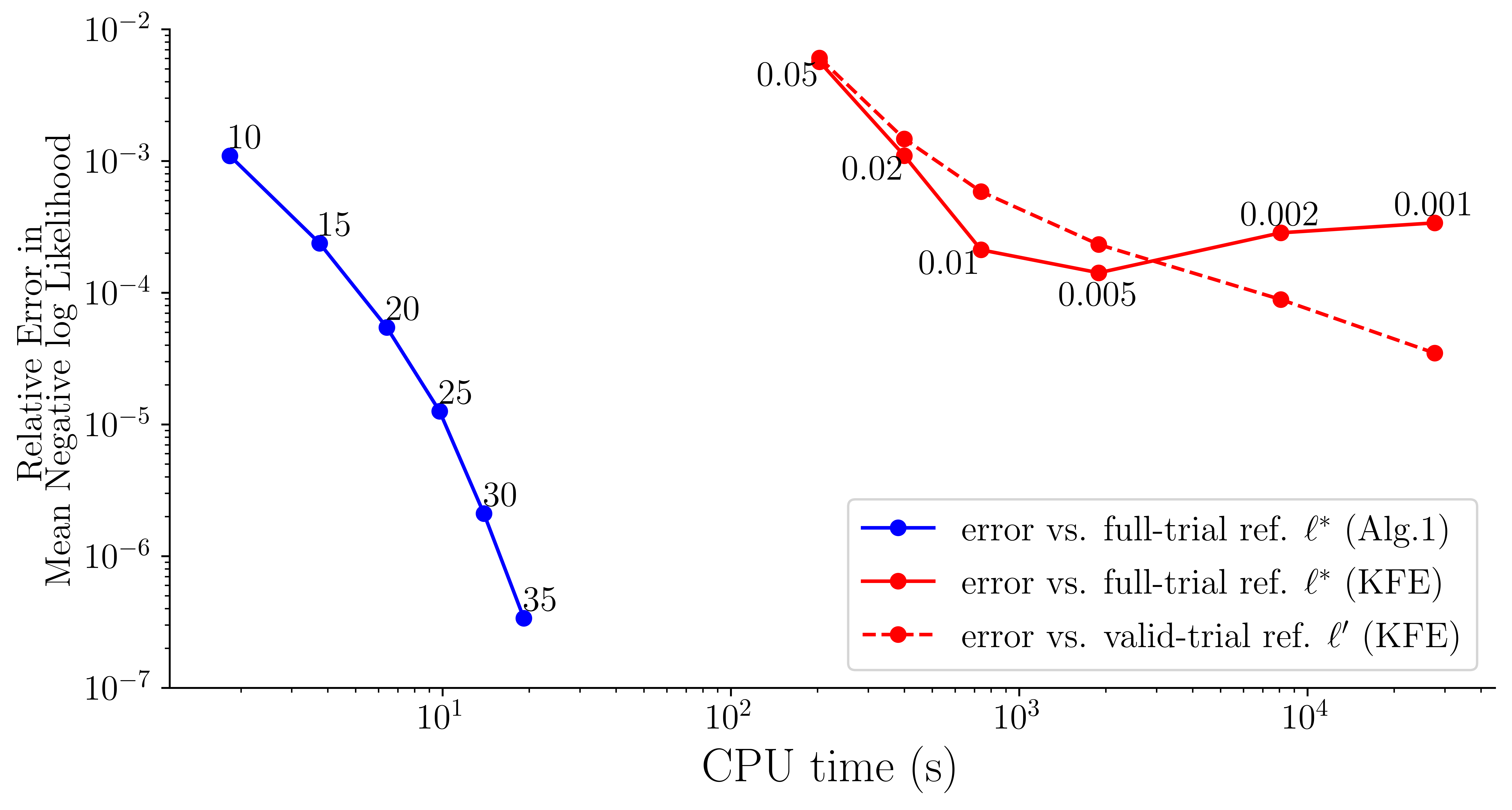}
	\caption{Speed–accuracy trade-off for Algorithm~\ref{alg: seq-multi} (blue) and the KFE method as implemented in PyDDM (red solid and dashed). Algorithm~\ref{alg: seq-multi} performance is shown for increasing quadrature order from $10$ to $35$. KFE performance is shown for decreasing space-time discretization ($\Delta t=\Delta x$) from $0.05$ to $0.001$. In all cases, we used the corresponding method to compute the FPTDs for all trials: $(f(\tau_i, C_i\mid\boldsymbol{\Xi}_i))_{i=1}^n$, where $n=50{,}000$ and the parameters in $\boldsymbol{\Xi}_i$ are those that generated the data. CPU time in each case is the total time to compute FPTDs for all 50,000 trials. The FPTDs can be combined to give the trial-normalized negative log-likelihood $\ell^*=-\frac{1}{n}\sum_{i=1}^n \log f(\tau_i, C_i\mid\boldsymbol{\Xi}_i)$ (see equation \eqref{eqn: likeADDM}). Since the aDDM in this example corresponds exactly to a multi-stage GDDM, we used Algorithm~\ref{alg: seq-multi} with a high quadrature order ($200$ quadrature points per stage) to obtain reference values for $\ell^*$ and $\ell'$ (defined below). Relative errors for Algorithm~\ref{alg: seq-multi} (blue line, vertical axis) are reported against this reference value of $\ell^*$.
    In four trials the KFE implementation (with these $\Delta t,\Delta x$) gave FPTDs of zero, yielding an infinite approximation of $\ell^*$. To provide a more meaningful sense of error for KFE, we also calculated the likelihood excluding those trials, namely $\ell'=-\frac{1}{n-4}\sum_{\text{valid $i$}} \log f(\tau_i, C_i\mid\boldsymbol{\Xi}_i)$, where the ``valid'' $i$ are the $49{,}996$ trials for which KFE gave strictly positive FPTD approximations. The solid red line compares the KFE approximation of $\ell'$ to the reference value of $\ell^*$. The error plateaus because KFE is forced to exclude the trials where its approximation fails. The dashed red line compares the KFE approximation of $\ell'$ to the reference value of $\ell'$. Although this is the wrong reference value for the target problem, it illustrates that the KFE approximation continues to improve as the discretization gets finer. 
}
	\label{fig: speed_accu}
\end{figure*}

    

Figure \ref{fig: speed_accu} demonstrates that our method outperforms the KFE method by more than \textit{three orders of magnitude} at a given target accuracy level. Moreover, our method is more numerically stable when evaluating trials with small likelihoods. PyDDM returned a likelihood of zero for several trials, complicating its use for likelihood-based methods.

The speedup of our method for aDDMs largely stems from its ability to progress in large time steps aligned with saccade times, leading to extremely fast single trial computation speed. The accurate analytical formulas in equations \eqref{eqn: fptd_upper-basic}--\eqref{eqn: fptd_vertical-basic} and the high-order efficient quadrature rules in equation \eqref{eqn: GL-quad} further contribute to our method's speed. In contrast, the KFE method requires much finer time discretization in the backward Euler scheme and therefore substantially more floating-point operations to attain comparable accuracy. 

\subsubsection{Statistical Inference of aDDM Parameters}
Here we conduct parameter recovery studies for the parameters $\boldsymbol{\theta} = (\eta, \kappa, a, b,x_0)$ in both the frequentist and Bayesian settings. We use the maximum likelihood estimator (MLE) 
\begin{equation*}
\begin{aligned}
\widehat{\boldsymbol{\theta}}_{n}^{\text{\sc mle}}&=\argmax_{\boldsymbol{\theta}} \mathcal{L}_{\text{aDDM}}(\boldsymbol{\theta}\mid((\tau_i, C_i))_{i=1}^n)=\argmax_{\boldsymbol{\theta}} \prod_{i=1}^n f(\tau_i, C_i\mid\boldsymbol{\Xi}_i)
\end{aligned}
\end{equation*}
as the frequentist approach. In the Bayesian framework, we place a prior $f_{\text{prior}}(\boldsymbol{\theta})$ on the parameters and approximate the posterior density
\begin{equation*}
\begin{aligned}
f_{\text{posterior}}(\boldsymbol{\theta})\propto f_{\text{prior}}(\boldsymbol{\theta}) \prod_{i=1}^n f(\tau_i, C_i\mid \boldsymbol{\Xi}_i)
\end{aligned}
\end{equation*}
using Markov chain Monte Carlo (MCMC). For both approaches we use Algorithm \ref{alg: seq-multi} to evaluate the likelihood. The true and estimated parameters are shown in Table \ref{tab: inference}.

\begin{table}[htb!]
	\centering
    \caption{True values, point estimates and interval estimates of aDDM parameters. The maximum likelihood estimates are reported together with component-wise nonparametric pivotal bootstrap confidence intervals based on 1000 resamples. The optimization is carried out using the trust region method. For the Bayesian approach, we report the posterior means and 95\% Bayesian credible intervals. The MCMC sampling was performed using the Metropolis–Hastings algorithm with $10^4$ burn-in iterations followed by $10^5$ draws. The MLE, posterior means and medians (not reported) coincide up to three decimal places. Note that all true parameter values lie within their corresponding 95\% confidence intervals and credible intervals.}
    \label{tab: inference}
    \begin{tabular}{cccccc}
		\toprule
		& True & MLE & 95\% confidence interval & posterior mean & 95\% Bayesian credible interval\\
		\midrule
		$\eta$ & $0.7$ & $0.697$ & $[ 0.690,  0.705]$ & $0.697$ & $[0.690,  0.704]$ \\
		$\kappa$ & $0.5$ & $0.502$ & $[ 0.497,  0.507]$ & $0.502$ & $[0.497,  0.507]$  \\
		$a$ & $2.1$ & $2.104$ & $[ 2.093,  2.116]$ & $2.103$ & $[2.092,  2.115]$\\
		$b$ & $0.3$ & $0.304$ & $[ 0.300,  0.309]$ & $0.304$ &$[0.300,  0.308]$ \\
		$x_0$ & $-0.2\ \ $ & $-0.196\ \ $ & $[-0.205, -0.188]$ & $-0.196\ \ $ &$[-0.205, -0.188]$ \\
		\bottomrule
	\end{tabular}
\end{table}

For maximum likelihood estimation, the numerical optimization takes around 1 minute when run with 32 threads, a computation time that is readily achievable even on a standard desktop computer. For Bayesian inference, an MCMC run with $10^5$ draws completes in $10$ hours. In contrast, because a single round of accurate likelihood evaluation with the KFE approach alone requires more than $3$ hours, the numerical optimization would take days or more likely weeks, and Bayesian methods would be entirely out of reach. This illustrates that our method enables fast and accurate parameter recovery which no previous methods are capable of.

\section{Discussion}\label{sec: conclusion}
In this paper, we propose a novel numerical method for computing the first passage time densities (FPTDs), non-passage densities (NPDs), and non-passage probabilities (NPPs) of a general class of drift diffusion models (DDMs) with two time-dependent boundaries. We utilize analytical formulas for certain single-stage models, which are then flexibly combined via numerical quadrature to accommodate a multi-stage framework. We then generalize to prominent, but previously computationally-intractable models in cognitive science \citep{krajbich_attentional_2012, krajbich2010visual} and beyond. Our method beats current alternatives by over three orders of magnitude on speed of execution in relevant application scenarios, while maintaining desired accuracy, as shown in Section \ref{sec: stat_addm}. 

The benefits over existing approaches are especially notable for the likelihood computation of GDDMs with trial-specific covariates such as the aDDM. Despite strong scientific interest, applications of aDDMs have been constrained by computational challenges. Current approaches to inference in aDDMs rely on unjustified computational simplifications, the so-called time-averaged drift approximations (TADA); see, e.g., \citep{lombardi2021piecewise, molter2019glambox}. The TADA replaces the time-varying drift rate $\mu(t)$ in a single trial of the aDDM with the constant drift rate $\tau^{-1}\int_0^\tau \mu(t)\diff t$ that is simply the temporal average of the aDDM drift rate on that trial. Although the TADA enables fast parameter inference using analytic formulas, it is not a well-specified statistical model, since the drift rate on a trial depends on the decision time on that trial. In parallel work \citep{liu2026time}, we show that maximum likelihood estimators based on the TADA likelihood are not consistent for the corresponding aDDM parameters. This further illustrates the importance of developing practical computational tools for the complex GDDMs of interest in cognitive neuroscience and other fields.

Our focus here is fast likelihood evaluation. For statistical inference it is often useful to also have likelihood gradients. Since the quadrature points used in our implementation of Algorithm \ref{alg: seq-multi} are fixed, the algorithm is inherently end-to-end differentiable by automatic differentiation software. In future work we hope to implement Algorithm \ref{alg: seq-multi} in an automatic differentiation package such as JAX \citep{jax2018github} to allow the use of gradient-based statistical inference methods, such as gradient descent for maximum likelihood estimation and Hamiltonian Monte Carlo for Bayesian inference. Strategies like GPU parallelization and parallel numerical integration could be used to gain even further speed-ups, which might be important in practice when using MCMC for inference in large hierarchical Bayesian models with GDDM likelihoods.

Our approach for multi-stage approximation should be generalizable to broader classes of SDEs than the ones considered here if they admit analytic formulas for FPTDs and NPDs in simple cases and if they have the appropriate Markov structure. If analytic formulas cannot be found, the multi-stage approximation might still be useful if accurate approximations of single-stage FPTDs and NPDs can be constructed by other methods, such as SBI.

There are several important mathematical questions that we leave for future work relating to regularity conditions for the existence of FPTDs, regularity conditions for the convergence of approximate multi-stage FPTDs under increasingly fine partitioning, and accuracy of numerical approximations for infinite series truncation, piecewise linear approximation of boundaries, and numerical quadrature. Despite these unresolved issues we believe our algorithms will find immediate use in the psychological and cognitive sciences and will improve the state-of-the-art for statistical inference of GDDMs.


\section*{Declarations}
\bmhead{Acknowledgements}
This work was supported by the National Institute of Mental Health (grants P50 MH119467-01 and P50 MH106435-06A1), the Office of Naval Research (MURI Award N00014-23-1-2792), and the Brainstorm Program at the Robert J. and Nancy D. Carney Institute for Brain Science. S.L. was additionally supported by the JBB Endowed Graduate Fellowship in Brain Science.

\bmhead{Contributions}
M.J.F. and A.F. raised the research question and provided expertise on the computational cognitive science applications. M.T.H. and S.L. developed the methodological and theoretical framework. S.L. carried out the research, validated the results, and prepared the associated GitHub repository with support from all authors. S.L. wrote the original draft with feedback and revisions from all authors. M.J.F. secured funding for the project. M.T.H supervised the project.

\bmhead{Data Availability}

The code supporting this study will be released as a package \textsc{EFPT} and made publicly available at \href{https://github.com/RiverFlowsInYou98/efficient-fpt}{https://\allowbreak github.com/\allowbreak RiverFlowsInYou98/efficient-fpt} upon publication.

\bmhead{Competing interests}

The authors declare no competing interests.

\begin{appendices}
\section{}

\subsection{Proof of Almost Sure Convergence of First Passage Times under Refined Boundary Approximations}\label{sec: approx_proof}
In this section, we establish a continuity result, which serves as the mathematical foundation of approximating the first passage times on curved boundaries by first passage times on their piecewise linear approximations, as described in Section \ref{sec: approx}. We first consider the simplified setting of a Brownian motion hitting a single upper boundary.

\begin{theorem}\label{thm: approx}
Let $W(t)$ be a one-dimensional Brownian motion on a probability space $(\Omega, \mathcal{F},\mathbb{P})$, and let $u(t)$ be a function on $[0,\infty)$ that is H\"{o}lder continuous with exponent $1/2$, satisfying $u(0)>0$. Suppose that a sequence of continuous functions $(u_n)$ converges uniformly to $u$ on compact subsets of $[0, \infty)$. Define the first passage times $\tau, \tau_n$ as
\begin{equation*}
\begin{aligned}
\tau&=\inf \{t>0: W(t) \ge u(t)\}\\
\tau_n&=\inf \{t>0: W(t) \ge u_n(t)\}\\
\end{aligned}
\end{equation*}
Then, $\tau_n$ converges to $\tau$ almost surely.
\end{theorem}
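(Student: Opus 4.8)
The plan is to split the claim into the two one-sided statements $\liminf_n \tau_n \ge \tau$ and $\limsup_n \tau_n \le \tau$, proved on a single almost-sure event, and to treat $\{\tau=\infty\}$ separately. First I note the elementary facts that $\tau>0$ almost surely (since $W(0)=0<u(0)$ and both paths are continuous, so $W<u$ on a neighborhood of $0$) and that, when $\tau<\infty$, continuity of $W$ and $u$ forces $W(\tau)=u(\tau)$ (the strict inequality $W(s)<u(s)$ for $s<\tau$ gives $W(\tau)\le u(\tau)$, while $W(t_k)\ge u(t_k)$ along $t_k\downarrow\tau$ gives $W(\tau)\ge u(\tau)$).

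For the lower bound I would argue pathwise and use only the uniform-on-compacts convergence and continuity. Fix $t<\tau$. Then $s\mapsto W(s)-u(s)$ is continuous and strictly negative on the compact set $[0,t]$, so $\sup_{s\in[0,t]}\big(W(s)-u(s)\big)=-\delta$ for some $\delta>0$. Choosing $n$ large enough that $\sup_{s\in[0,t]}|u_n(s)-u(s)|<\delta/2$ yields $W(s)-u_n(s)<-\delta/2<0$ for all $s\in[0,t]$, hence $\tau_n>t$. Thus $\liminf_n\tau_n\ge t$ for every $t<\tau$, giving $\liminf_n\tau_n\ge\tau$. On the event $\{\tau=\infty\}$ this already forces $\tau_n\to\infty=\tau$, so it remains only to prove the upper bound on $\{\tau<\infty\}$.

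The crux is the upper bound, which rests on the following regularity claim: almost surely on $\{\tau<\infty\}$, for every $\varepsilon>0$ there exists $t^*\in(\tau,\tau+\varepsilon)$ with $W(t^*)>u(t^*)$ \emph{strictly}. Granting this, fix $\varepsilon>0$ and the corresponding $t^*$, set $\eta=W(t^*)-u(t^*)>0$; since $u_n(t^*)\to u(t^*)$, for all large $n$ we have $u_n(t^*)<u(t^*)+\eta=W(t^*)$, so $t^*$ lies in the crossing set of $u_n$ and $\tau_n\le t^*<\tau+\varepsilon$. Hence $\limsup_n\tau_n\le\tau+\varepsilon$ for every $\varepsilon$, i.e.\ $\limsup_n\tau_n\le\tau$, which together with the lower bound yields $\tau_n\to\tau$. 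To prove the claim I would invoke the strong Markov property: since $\tau$ is a stopping time, $B(h):=W(\tau+h)-W(\tau)$ is a standard Brownian motion, and using $W(\tau)=u(\tau)$ the desired strict crossing at $t^*=\tau+h$ is equivalent to $B(h)>u(\tau+h)-u(\tau)$. By the H\"older-$1/2$ hypothesis $u(\tau+h)-u(\tau)\le K h^{1/2}$ for small $h$, whereas the law of the iterated logarithm gives $\limsup_{h\to0^+}B(h)/\sqrt{2h\log\log(1/h)}=1$ almost surely, so along a sequence $h_k\downarrow0$ one has $B(h_k)\sim\sqrt{2h_k\log\log(1/h_k)}\gg K h_k^{1/2}$, producing the required $t^*=\tau+h_k$ arbitrarily close to $\tau$.

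I expect the upper bound, and specifically the regularity claim, to be the main obstacle. The delicate point is the exact matching of scales: the H\"older modulus $h^{1/2}$ is precisely the Brownian scale, so the crossing is \emph{not} guaranteed by na\"ive continuity and is in fact won only by the extra $\sqrt{\log\log(1/h)}$ factor supplied by the LIL, which is why exponent $1/2$ is the natural threshold for this theorem. Care is also needed to apply the LIL to the post-$\tau$ Brownian motion via the strong Markov property, and to arrange all exceptional sets (the null set for the LIL and for path continuity) into a single almost-sure event on which both inequalities hold simultaneously.
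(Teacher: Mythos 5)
Your proposal is correct and takes essentially the same route as the paper: the lower bound via uniform convergence of $u_n$ on compacts, and the upper bound via the strong Markov property, the H\"{o}lder-$1/2$ modulus of $u$, and the law of the iterated logarithm to produce strict crossings of $u$ immediately after $\tau$. Your ``regularity claim'' is exactly the paper's identity $\tau=\tau^{*}$ a.s.\ (with $\tau^{*}$ the strict-inequality hitting time), so the only difference is organizational --- you merge the paper's second and third steps instead of introducing the auxiliary stopping times $\tau^{*}$ and $\tau^{*}_n$.
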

\begin{proof}[\proofname \text{ of Theorem \ref{thm: approx}}]
Define
\begin{equation*}
\begin{aligned}
\tau^{*}&=\inf \{t>0: W(t)> u(t)\}\\
\tau^{*}_n&=\inf \{t>0: W(t)> u_n(t)\}\\
\end{aligned}
\end{equation*}
We prove the desired result in three steps. We first prove that $\liminf_{n\rightarrow\infty} \tau_n \ge \tau$. Fix $\omega\in\Omega$. For any $\epsilon>0$, the function $t\mapsto u(t)-W(t)$ is continuous and attains its minimum $\delta=\delta(\epsilon)>0$ on $[0, \tau-\epsilon]$, i.e., $W(t)< u(t)-\delta$ when $t\le \tau-\epsilon$. Since $u_n \rightarrow u$ uniformly on $[0, \tau-\epsilon]$, there exists $N=N(\delta)=N(\epsilon)$ such that for all $n \geq N$ we have $u_n(t)>u(t)-\delta>W(t)$ for $t<\tau-\epsilon$. Thus, we have that for all $t<\tau-\epsilon$ and $n \geq N$, $W(t)<u_n(t)$. This implies $\tau_n \geq \tau-\epsilon$ for $n\ge N$. Taking $\epsilon \rightarrow 0$ gives $\liminf _{n \rightarrow \infty} \tau_n \geq \tau$.

Next, we prove that $\limsup_{n\rightarrow\infty} \tau^{*}_n \le \tau^{*}$. Fix $\omega\in\Omega$. By the definition of $\tau^{*}$, we know that for any $\epsilon>0$, there exists $t_\epsilon\in(\tau^{*}, \tau^{*}+\epsilon)$ and $\delta>0$, such that $W(t_\epsilon)>u(t_\epsilon)+\delta$. Since $u_n(t_\epsilon) \rightarrow u(t_\epsilon)$, there exists $N=N(\delta)=N(\epsilon)$ such that for all $n \geq N$ we have $u_n(t_\epsilon)<u(t_\epsilon)+\delta< W(t_\epsilon)$. Thus, there exists a $t_\epsilon\in (\tau^{*}, \tau^{*}+\epsilon)$ such that $u_n(t_\epsilon)<W(t_\epsilon)$ for $n\ge N$. This implies $\tau^{*}_n\le\tau^{*}+\epsilon$ for $n\ge N$. Taking $\epsilon\rightarrow 0$ gives $\limsup _{n \rightarrow \infty} \tau^{*}_n \leq \tau^{*}$. Hence $\limsup _{n \rightarrow \infty} \tau^{*}_n \leq \tau^{*}$.

Finally, we prove that $\tau=\tau^{*}$ a.s.. The inequality $\tau \leq \tau^*$ is immediate, so it suffices to show that $\mathbb{P}\left(\tau<\tau^*\right)=0$. Without loss of generality, we assume that $\tau$ is almost surely finite, since for any $\omega$ where $\tau(\omega)=\infty$, we trivially have $\tau(\omega)=\tau^*(\omega)=\infty$.

Take $\omega$ where $\tau(\omega)<\tau^{*}(\omega)$, then there exist $\delta>0$ such that $W(t)\le u(t)$ for $t\in (\tau, \tau+\delta)$. By the strong Markov property of $W$, we know that $B(s)= W(s+\tau)-W(\tau)$ is also a Brownian motion. Let $C$ be the H\"{o}lder $C^{0, 1/2}$ semi-norm of $u$, then we have
$u(t)\le u(\tau)+C(t-\tau)^{1/2}$ for $t>\tau$. Hence, we have
\begin{equation*}
W(\tau)+B(s)=W(s+\tau)\le u(s+\tau)\le u(\tau)+C s^{1/2}
\end{equation*}
when $s\in (0, \delta)$. By the continuity of $W$ and $u$ we have $W(\tau)=u(\tau)$, so $B(s)\le Cs^{1/2}$ for $s\in (0,\delta)$.

However, this would imply
\begin{equation}\label{eqn: lil-contradict}
\begin{aligned}
&\limsup _{s \rightarrow 0^+} \frac{B(s)}{\sqrt{2 s \log \log (1 / s)}}\le \limsup _{s \rightarrow 0^+} \frac{C}{\sqrt{2 \log \log (1 / s)}}=0
\end{aligned}
\end{equation}
Yet, the law of the iterated logarithm states that\begin{equation*}
 \limsup _{s \rightarrow 0^+} \frac{B(s)}{\sqrt{2 s \log \log (1 / s)}}=1\quad  \text{a.s.}  
\end{equation*}
so equation \eqref{eqn: lil-contradict} holds with probability $0$. So $\mathbb{P}(\tau< \tau^{*})=0$. 

Combining the above three steps, we obtain
\begin{equation*}
\tau\le \liminf_{n\rightarrow\infty} \tau_n\le \limsup_{n\rightarrow\infty} \tau_n\le\limsup_{n\rightarrow\infty}\tau_n^* \le\tau^*
\end{equation*}
from $\tau=\tau^*$ a.s., we conclude that $\lim_{n\rightarrow\infty} \tau_n=\tau$ a.s..
\end{proof}

Returning to the two-boundary setting, and adopting the notation in Section \ref{sec: approx}, the piecewise linear interpolations of the H\"{o}lder-$1/2$ continuous boundaries $u$ and $\ell$, denoted by $\overline{u}$ and $\overline{\ell}$ respectively, converge to $u$ and $\ell$ uniformly on compact subsets of $[0, \infty)$ as the interpolation grid is refined (see p35 of \cite{de1978practical}). Consequently, let $\bar{\tau}_u$ and $\bar{\tau}_{\ell}$ denote the approximate one-sided first passage times, and $\tau_u$ and $\tau_{\ell}$ their exact counterparts, then we can invoke Theorem \ref{thm: approx} to conclude that the $\overline{\tau}_u$ and $\overline{\tau}_\ell$ converge almost surely to $\tau_u$ and $\tau_\ell$ on their respective boundaries. Since $\overline{\tau}=\overline{\tau}_u\wedge\overline{\tau}_\ell, \tau=\tau_u\wedge\tau_\ell$, it follows that $\overline{\tau}\rightarrow\tau$ a.s.

\begin{figure*}[htb!]
  \centering
  \includegraphics[width=0.8\textwidth]{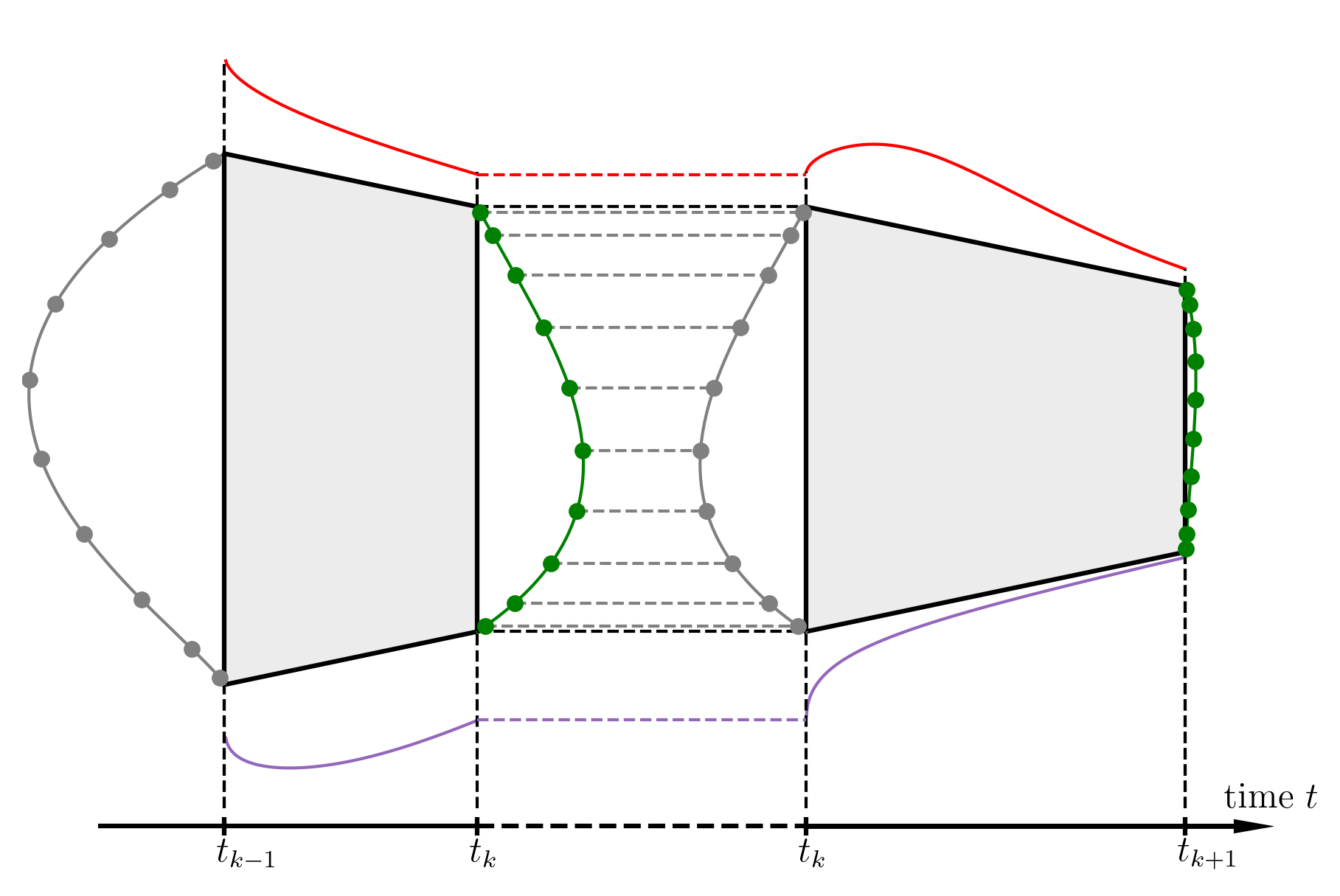}
\caption{At each vertical boundary between two stages, probability transfer occurs through a set of precomputed quadrature points—green dots on the output boundaries and grey dots on the input boundaries. The function $q_k$, evaluated at the quadrature points along the vertical boundary at $t=t_k$
(denoted as $\mathbf{q}_k$ in Algorithm \ref{alg: seq-multi-quad}), encapsulates the necessary information for integrating over the probability measure of the starting position $X(t_k)$, to facilitate computations in the next stage.}
\label{fig: seq_stitch}
\end{figure*}
\subsection{Transformation of a Diffusion Process to a Brownian Motion}\label{appendix: cherkasov}
Here, we present the Cherkasov condition as outlined in \cite{ricciardi1976transformation}, which provides the necessary and sufficient conditions for the existence of the transformation \eqref{eqn: transform} that converts a diffusion process $X(t)$ into a Brownian motion.
\begin{theorem}[\cite{ricciardi1976transformation}]
Let $X(t)$ be a diffusion process defined on the interval $[0, T_{\text{end}}]$ satisfying the $\operatorname{SDE}$ \eqref{eqn: sde}. If there exists a pair of functions $c_1(t)$ and $c_2(t)$ such that
\begin{equation}\label{eqn: cherkasov}
\begin{aligned}
&\mu(x, t)=\frac{\frac{\partial}{\partial x}\left(\sigma^2(x, t)\right)}{4}+\frac{\sigma(x, t)}{2}\Big(c_1(t)+\int_0^x \frac{c_2(t) \sigma^2(y, t)+\frac{\partial}{\partial t}\left(\sigma^2(y, t)\right)}{\sigma^3(y, t)} \diff y\Big)
\end{aligned}
\end{equation}
then there exists a coordinate transformation of the form \eqref{eqn: transform} such that $\psi(X(\gamma^{-1}(s)), \gamma^{-1}(s))$, the image of $X(t)$ under this transformation, has the same dynamics with a standard Brownian motion. This transformation is given by
\begin{equation*}
\begin{aligned}
s=\gamma(t)=&\int_{0}^t e^{-\int_{0}^r c_2(z) \diff z} \diff r\\
w=\psi(x, t)=&e^{-\frac{1}{2} \int_{0}^t c_2(s) \diff s} \int_0^x \frac{1}{\sigma(y, t)}\diff y-\frac{1}{2} \int_{0}^t c_1(r) e^{-\frac{1}{2} \int_{0}^r c_2(z) \diff z} \diff r\\
\end{aligned}
\end{equation*}
\end{theorem}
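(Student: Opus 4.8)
The plan is to verify directly, via It\^o's formula followed by a time change, that the explicit transformation $(s,w)=(\gamma(t),\psi(x,t))$ stated in the theorem sends $X$ to a process with the law of standard Brownian motion, precisely when $\mu$ has the form dictated by the Cherkasov condition \eqref{eqn: cherkasov}. The heart of the matter is an algebraic identity: substituting \eqref{eqn: cherkasov} into the It\^o drift of the transformed process makes it vanish.

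First I would introduce the abbreviations $h(t)=e^{-\frac12\int_0^t c_2(z)\diff z}$ and $G(x,t)=\int_0^x \sigma(y,t)^{-1}\diff y$, so that $\psi(x,t)=h(t)G(x,t)-\tfrac12\int_0^t c_1(r)h(r)\diff r$ and $\gamma(t)=\int_0^t h(r)^2\diff r$. Writing $\sigma_x,\sigma_t$ for the partial derivatives of $\sigma$ and using $h'=-\tfrac12 c_2 h$, the relevant partial derivatives of $\psi$ are $\psi_x=h/\sigma$, $\psi_{xx}=-h\sigma_x/\sigma^2$, and $\psi_t=-\tfrac12 c_2 hG - h\int_0^x (\sigma_t/\sigma^2)\diff y-\tfrac12 c_1 h$.

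Next I would apply It\^o's formula to $Y(t):=\psi(X(t),t)$, obtaining $\diff Y=(\psi_t+\mu\psi_x+\tfrac12\sigma^2\psi_{xx})\diff t+\sigma\psi_x\diff W$. The diffusion coefficient is immediately $\sigma\psi_x=h$. For the drift I would substitute the Cherkasov form of $\mu$; using $\partial_x(\sigma^2)=2\sigma\sigma_x$ and $\partial_t(\sigma^2)=2\sigma\sigma_t$, the condition \eqref{eqn: cherkasov} rearranges to $\mu/\sigma = \tfrac12\sigma_x+\tfrac12 c_1+\tfrac12 c_2 G + \int_0^x (\sigma_t/\sigma^2)\diff y$. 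Inserting this into $\psi_t+\mu\psi_x+\tfrac12\sigma^2\psi_{xx}$ produces a term-by-term cancellation: the $c_2 hG$ terms, the $h\int_0^x(\sigma_t/\sigma^2)$ terms, the $c_1 h$ terms, and the $\sigma_x$ terms each annihilate in pairs, leaving zero drift. Hence $\diff Y(t)=h(t)\diff W(t)$.

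Finally I would carry out the time change. Since $Y$ is a continuous local martingale with quadratic variation $\langle Y\rangle_t=\int_0^t h(r)^2\diff r=\gamma(t)$, the Dambis--Dubins--Schwarz theorem yields a standard Brownian motion $\widetilde W$ with $Y(t)=Y(0)+\widetilde W(\gamma(t))$. Setting $s=\gamma(t)$ and $\widetilde X(s)=Y(\gamma^{-1}(s))=\psi(X(\gamma^{-1}(s)),\gamma^{-1}(s))$ gives $\widetilde X(s)=\psi(X(0),0)+\widetilde W(s)$, i.e.\ $\widetilde X$ has the dynamics of standard Brownian motion (with the induced, possibly random, initial position), as claimed. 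I expect the main obstacle to be the bookkeeping in the drift computation---specifically, matching the $t$-derivative terms arising from the $\sigma(y,t)$-dependence inside $G$ against the integral term $\int_0^x(\sigma_t/\sigma^2)\diff y$ generated by \eqref{eqn: cherkasov}---rather than the time-change step, which is standard once the quadratic variation is identified with $\gamma$.
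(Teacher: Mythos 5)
Your proposal is correct, and it is worth noting that the paper itself contains no proof of this statement: Theorem 2 is quoted verbatim from \cite{ricciardi1976transformation}, and the appendix only \emph{instantiates} the condition for the Ornstein--Uhlenbeck example (solving $\theta(\lambda-x)=\tfrac{\sigma}{2}c_1(t)+\tfrac{x}{2}c_2(t)$ for $c_1,c_2$). So your argument is a genuinely self-contained alternative to the cited source. Ricciardi's original derivation works at the level of the Kolmogorov/Fokker--Planck equation: one posits a change of variables of type \eqref{eqn: transform}, demands that the transformed transition density satisfy the heat equation, and solves the resulting PDE system for $\gamma$ and $\psi$, which yields \eqref{eqn: cherkasov} as a necessary \emph{and} sufficient condition together with the displayed formulas. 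Your It\^o-plus-time-change verification proves exactly the implication the theorem as stated asserts (sufficiency), in modern stochastic-calculus language, and your algebra is right: with $\partial_x(\sigma^2)=2\sigma\sigma_x$ and $\partial_t(\sigma^2)=2\sigma\sigma_t$, condition \eqref{eqn: cherkasov} does rearrange to $\mu/\sigma=\tfrac12\sigma_x+\tfrac12 c_1+\tfrac12 c_2 G+\int_0^x(\sigma_t/\sigma^2)\diff y$, the four pairs of drift terms cancel as you claim, $\sigma\psi_x=h$, and $\langle Y\rangle_t=\int_0^t h(r)^2\diff r=\gamma(t)$ since $h^2(r)=e^{-\int_0^r c_2(z)\diff z}$. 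Three points deserve to be made explicit in a final write-up: (i) It\^o's formula requires $\psi\in C^{2,1}$, i.e.\ $\sigma>0$ continuously differentiable in $x$ and $t$, regularity that \eqref{eqn: cherkasov} implicitly presumes; (ii) the theorem promises a transformation \emph{of the form} \eqref{eqn: transform}, so you should observe that $\gamma'=h^2>0$ and $\psi_x=h/\sigma>0$, making $\gamma$ and $\psi(\cdot,t)$ increasing bijections as required; (iii) invoking Dambis--Dubins--Schwarz on the finite horizon $[0,T_{\text{end}}]$ strictly needs $\langle Y\rangle_\infty=\infty$ or an enlargement of the space, but since $h$ is deterministic you can bypass this entirely by applying L\'evy's characterization to $\widetilde{W}(s)=Y(\gamma^{-1}(s))-Y(0)$, which has the added benefit of showing $\widetilde{W}$ is a functional of $W$ alone and hence independent of $X(0)$ --- exactly the structure the paper needs in Section \ref{sec: generalization}, where the transformed process is Brownian motion started from the random point $\psi(X(0),0)$.
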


We take the Ornstein-Uhlenbeck process \eqref{eqn: ou} as an example. Here $\mu(x, t)=\theta(\lambda-x)$ and $\sigma(x,t)=\sigma$, and the Cherkasov condition \eqref{eqn: cherkasov} reads
\begin{equation*}
\theta(\lambda- x)=\frac{\sigma}{2} c_1(t)+\frac{x}{2} c_2(t)
\end{equation*}
We can take $c_1(t)=\frac{2\theta\lambda}{\sigma}, c_2(t)=-2\theta$. Correspondingly, the transformation is given by $s=\gamma(t)=\frac{e^{2\theta t}-1}{2\theta}, w=\psi(x, t)=\frac{1}{\sigma}(e^{\theta t} x-\lambda e^{\theta t}+\lambda)$. This is the same as the transformation given in Section \ref{sec: generalization}. Verifying the Cherkasov condition for general diffusion processes requires analyzing equation \eqref{eqn: cherkasov} on a case-by-case basis.

\subsection{Fast Implementation}\label{sec: implementation}
In this section, we discuss some practical aspects of our algorithm that contribute to its fast execution, including important details about numerical integration and function representation that are glossed over in the description of Algorithm \ref{alg: seq-multi}. We present Algorithm \ref{alg: seq-multi-quad}, the practical version of Algorithm \ref{alg: seq-multi} to conclude this section.

\begin{itemize}
	\item If $t\in (t_{k-1},t_k]$, then we can halt Algorithm \ref{alg: seq-multi} after the evaluation of $f_c(t)$ in the $k$-th stage. Also, note that evaluations of $f_u$ and $f_\ell$ can be skipped unless they are needed for the final output, since only $q_k$ is propagated across stages.
	
	\item Formulas \eqref{eqn: fptd_upper-basic}, \eqref{eqn: fptd_lower-basic} and \eqref{eqn: fptd_vertical-basic} are approximated by truncating the infinite sum to a desired accuracy. The truncation criteria can be used as a way to balance speed and accuracy. 
	\item Piecewise linear approximations of the transformed boundaries in Section \ref{sec: generalization} are realized via adaptive linear interpolation, which balances accuracy and computational efficiency by dynamically adjusting both the number and position of interpolation points based on the function's variability. 
    
    More specifically, we adopt the following procedure in our numerical examples (Sections \ref{sec: example2}, \ref{sec: example3} and \ref{sec: error_example}): we initialize a knot set containing all stage endpoints in the original model, construct a piecewise linear interpolant through the corresponding boundary values, and iteratively refine the knot set by inserting additional knots at locations where the current interpolant attains its largest absolute deviation from the true boundary on a fine evaluation grid. The refinement terminates once this maximum deviation falls below a prescribed tolerance $\varepsilon$ on every subinterval. The tolerance $\varepsilon$ controls the fineness of the resulting interpolation grid.

\end{itemize}

To compute the likelihood given a dataset, we simply iterate Algorithm \ref{alg: seq-multi} over all data. Additional dataset-level speedups can be achieved using the following strategies:
\begin{itemize}
\item When multiple observations share {\em the same} GDDM configuration, a single pass of Algorithm \ref{alg: seq-multi} suffices to evaluate their likelihoods simultaneously, eliminating redundant computations. This idea underlies the speedups achieved by many previous methods (e.g., KFE-based approaches). However, such methods slow down considerably when GDDMs vary across trials, whereas our approach adapts seamlessly to both homogeneous and heterogeneous settings while maintaining high computational efficiency.

\item The computation of each configuration in a dataset is independent of each other, rendering the problem ``embarrassingly parallelizable" across configurations --- we can distribute the dataset over multiple threads, and since no communication is needed between the sub-tasks, the computational time should ideally be inversely proportional to the number of threads, up to the point where the number of threads equals the number of configurations. Consequently, models such as the aDDM, where each trial has a distinct configuration, can benefit substantially from this form of parallelization. We repeat the timing experiment for Algorithm \ref{alg: seq-multi} described in Section \ref{sec: timing} with multiple threads and report the results in Table \ref{tab: aDDM_parallel}, which demonstrate that our method can easily be further accelerated through parallelization.

\end{itemize}

\begin{table}[htb!]
	\centering
    \caption{Evaluation times for the aDDM likelihood \eqref{eqn: likeADDM} with $50{,}000$ trial data, using Algorithm \ref{alg: seq-multi} under different thread counts.} 
\begin{tabular*}{\columnwidth}{@{\hspace{\tabcolsep}\extracolsep{\fill}}ccccccc@{\hspace{\tabcolsep}}}
\toprule
\textbf{\# Threads} & 1 & 2 & 4 & 8 & 16 & 32\\\midrule
\textbf{Time$(\mathrm{s})$}&5.547 &2.698 &1.369& 0.760& 0.424& 0.264\\\bottomrule
\end{tabular*}
\label{tab: aDDM_parallel}
\end{table}

\subsubsection{Detailed Implementation of the Quadrature}\label{appendix: quad}

\begin{algorithm*}[htb!]
\SetAlgoRefName{1$'$}
\SetKwInOut{Input}{Input}
\SetKwInOut{Output}{Output}
\caption{Sequential likelihood evaluation of a multi-stage GDDM, with Gauss-Legendre quadrature}
\label{alg: seq-multi-quad}
\Input{A multi-stage model configuration as described in Section \ref{sec: multi}; An observation $(t,c)$ for $t_{K-1}<t\le t_K$ ($t$ is in the $K$-th stage) and $c\in\{u, \ell\}$ or an observation of ``non-response", in the latter case the number of effective stages $K$ is set to be $d+1$; $q_0$ as the sub-probability ``density" of $X(0)$.}
\Output{$f(t,c)=f_c(t)$ for a datum pair $(t,c)$, or $Q$ for ``non-response"}
\BlankLine
\uIf(\tcc*[h]{$X(0)$ is continuous}){$q_0\text{ is a pdf}$}{$\mathbf{w}_0, \boldsymbol{\xi}_0\gets\text{gausslegendre}(m_0)$\;
$\mathbf{x}_{0}\gets\frac{u(t_0)-\ell(t_0)}{2}\boldsymbol{\xi}_0+\frac{\ell(t_0)+u(t_0)}{2}$\;
$\mathbf{q}_0=q_0(\mathbf{x}_0)$\;}
\ElseIf(\tcc*[h]{$X(0)$ is discrete}){$q_0=\sum_{j=1}^J w_{0j}\delta_{x_{0j}}$}{
$\mathbf{w}_0\gets (w_{01}, \cdots, w_{0J})^\top$\;
$\mathbf{x}_0\gets(x_{01}, \cdots, x_{0J})^\top$\;
$\mathbf{q}_0\gets(1, \cdots, 1)^\top$\;
}

\For{$k = 1$ \KwTo $K-1$}{
$\mathbf{w}_k, \boldsymbol{\xi}_k\gets\text{gausslegendre}(m_{k})$\;
$\mathbf{x}_{k}\gets\frac{u(t_k)-\ell(t_k)}{2}\boldsymbol{\xi}_k+\frac{\ell(t_k)+u(t_k)}{2}$\;
$\mathbf{P}_k\gets q^{\text{single}}(\mathbf{x}_k; \mu_k, \sigma_k, \mathcal{B}_k, \mathbf{x}_{k-1}^\top)$\;
$\mathbf{q}_k\gets\mathbf{P}_k(\mathbf{q}_{k-1}\odot \mathbf{w}_{k-1})$\;
}
\uIf{observation is $(t,c)$}{$\mathbf{c}_K\gets f_c^{\text{single}}(t-t_{K-1};\mu_K, \sigma_K, \mathcal{B}_K, \mathbf{x}_{K-1}^\top)$\;
$f(t, c)\gets\mathbf{c}_K(\mathbf{q}_{K-1}\odot \mathbf{w}_{K-1})$\;}
\ElseIf{observation is ``non-response"}{$Q\gets \mathbf{w}_d^\top\mathbf{q}_d$}
\end{algorithm*}

The $m$-point Gauss-Legendre quadrature is given by
\begin{equation}\label{eqn: GL-quad}
\begin{aligned}
\int_a^b f(x) \diff x \approx&\frac{b-a}{2} \sum_{i=1}^m w_i f\Big(\frac{b-a}{2} \xi_i+\frac{a+b}{2}\Big)=\frac{b-a}{2}\mathbf{w}^\top f\Big(\frac{b-a}{2}\boldsymbol{\xi}+\frac{a+b}{2}\Big)\\
\end{aligned}
\end{equation}
where $\mathbf{w}=(w_1,\cdots,w_m)^\top$ are the quadrature weights, $\boldsymbol{\xi}=(\xi_1,\cdots, \xi_m)^\top$ are the quadrature points on the reference interval $[-1,1]$. The formula \eqref{eqn: GL-quad} is exact for polynomials of degree up to $2m-1$. $\mathbf{w}$ and $\boldsymbol{\xi}$ can be computed by formulas involving Legendre polynomials but are usually tabulated to great accuracy.

To apply formula \eqref{eqn: GL-quad} to our algorithm, at each stage $k$, we only need to track the density at precomputed quadrature points on each vertical boundary. We then proceed to the next stage by computing the transitional densities $q_k$ from the current quadrature points to the quadrature points on the next vertical boundary (see \figref{fig: seq_stitch}). Regarding the order of quadrature, the functions in equations \eqref{eqn: fptd_upper-basic} and \eqref{eqn: fptd_lower-basic} are observed to exhibit a more rapid growth of derivatives compared to those in equation \eqref{eqn: fptd_vertical-basic}, hence the integrand in the final stage is less regular than in the earlier stages. Therefore, we typically employ a quadrature rule of equal or higher order in the final stage. In other words, if we let $m_k$ denote the quadrature order for integration in the $k$-th stage, then when $\tau$ lies in the $K$-th stage, we can choose $m_K \geq m_i$ for $i=1, \ldots, K-1$ to achieve more efficient computation.

Here, we present Algorithm \ref{alg: seq-multi-quad}, which incorporates numerical integration via \eqref{eqn: GL-quad} into Algorithm \ref{alg: seq-multi}. Note that different initial treatments are needed for discrete and continuous $X(0)$, as if $X(0)$ is discrete, we only need to directly use formulas \eqref{eqn: fptd_upper}, \eqref{eqn: fptd_lower} and \eqref{eqn: fptd_vertical}, and there is no need for integration in the first stage. In cases where \( X(0) \) is a mixture of discrete and continuous components, we handle each part accordingly. However, we omit this case in Algorithm \ref{alg: seq-multi-quad} for brevity, as its treatment follows naturally from the discrete and continuous cases. 

In Algorithm \ref{alg: seq-multi-quad}, we denote the abstract process of acquiring $\mathbf{w}, \boldsymbol{\xi}$ of order $m$ as $\mathbf{w}, \boldsymbol{\xi}\leftarrow\text{gausslegendre}(m)$, and the symbol \( \odot \) represents the Hadamard (element-wise) product for vectors. Assuming constant time complexity for evaluating the single-stage FPTDs $f_u^{\text{single}}, f_\ell^{\text{single}}$ and NPD $q^{\text{single}}$, the overall time complexity of Algorithm \ref{alg: seq-multi-quad} is $\Theta(\sum_{k=1}^K m_{k-1}m_k)$.

\subsection{Error Analyses of Algorithm \ref{alg: seq-multi-quad}}

There are three sources of error that contribute to the overall approximation error of the FPTDs and NPDs produced by Algorithm \ref{alg: seq-multi-quad}:
\begin{itemize}
    \item Truncating the infinite sums in the single-stage formulas \eqref{eqn: fptd_upper-basic}, \eqref{eqn: fptd_lower-basic} and \eqref{eqn: fptd_vertical-basic} introduces a controllable truncation error whose magnitude depends on the chosen truncation criterion. Empirically, we observe that the series usually converge numerically with very few terms for $t$ that is not too large.  
    \item Algorithm \ref{alg: seq-multi} evaluates several integrals when propagating NPD across stages and when marginalizing over random starting positions. Approximating these integrals via Gauss-Legendre quadrature \eqref{eqn: GL-quad} induces a numerical integration error that decreases as the quadrature order $m_k$ is increased.
    \item For general GDDMs with non-piecewise-linear boundaries, the approximation of the original boundaries by their piecewise linear interpolants (Section \ref{sec: approx}) on a time partition introduces an approximation error that goes down as the partition is refined.
\end{itemize}
The truncation criterion for the infinite series, the quadrature order, and the fineness of the time partition used to interpolate the boundaries can all be tuned to balance computational cost and accuracy. Empirically, we observe that for multi-stage models, using a moderate number of series terms (about $10$) together with a moderate Gauss–Legendre quadrature order (about $20–30$), Algorithm \ref{alg: seq-multi-quad} already achieves a high accuracy that is more than sufficient for practical purposes. For GDDMs with non-piecewise-linear boundaries, the overall error is typically dominated by the boundary approximations.
\subsubsection{A Numerical Study of Errors}\label{sec: error_example}
Here we illustrate, through a simple example, how boundary approximation and numerical integration impacts the resulting FPTD. We consider a standard Brownian motion 
\begin{equation*}
X(t)=x_0+W(t)    
\end{equation*}
starting at $x=x_0$, 
and study its first passage time to the two square root boundaries
\begin{equation*}
u(t)=a\sqrt{T-t},\quad \ell(t)=-b\sqrt{T-t},
\end{equation*}
where $a, b, T>0, x_0\in (-b\sqrt{T}, a\sqrt{T})$.

To obtain reference values for the FPTDs, we apply the time change $s=\log(T/(T-t))$, which transforms the original problem into a first passage time problem for the (unstable) Ornstein–Uhlenbeck process
\begin{equation*}
\diff \widetilde{X}(s)=\frac{1}{2} \widetilde{X}(s) \diff s+\diff  \widetilde{W}(s),
\end{equation*}
where 
\begin{equation*}
\widetilde{W}(s)=\int_0^{T(1-e^{-s})} \frac{1}{\sqrt{T-r}} \diff W(r)
\end{equation*}
is another standard Brownian motion. Under this transformation, the two absorbing boundaries become constant:
\begin{equation*}
\widetilde{u}(s)=a,\quad \widetilde{\ell}(s)=b.
\end{equation*}
The reference FPTDs are then obtained by solving the KFE governing the transformed problem using a highly accurate spectral element method.

In this example, we take $a=0.6, b=0.4, T=5, x_0=0.1$. Following the procedure in Section \ref{sec: summary}, we approximate the boundaries using adaptive piecewise linear interpolation with tolerance $\varepsilon$. The tolerance $\varepsilon$ controls the fineness of the resulting approximation; we then compute the corresponding approximate FPTDs using Algorithm \ref{alg: seq-multi-quad} at different quadrature orders. We compare these outputs to the reference FPTDs, and the results are displayed in Figure \ref{fig: error_llhd} and Table \ref{tab: error}. 
\begin{figure*}[htb!]
	\centering
	\includegraphics[width=0.8\textwidth]{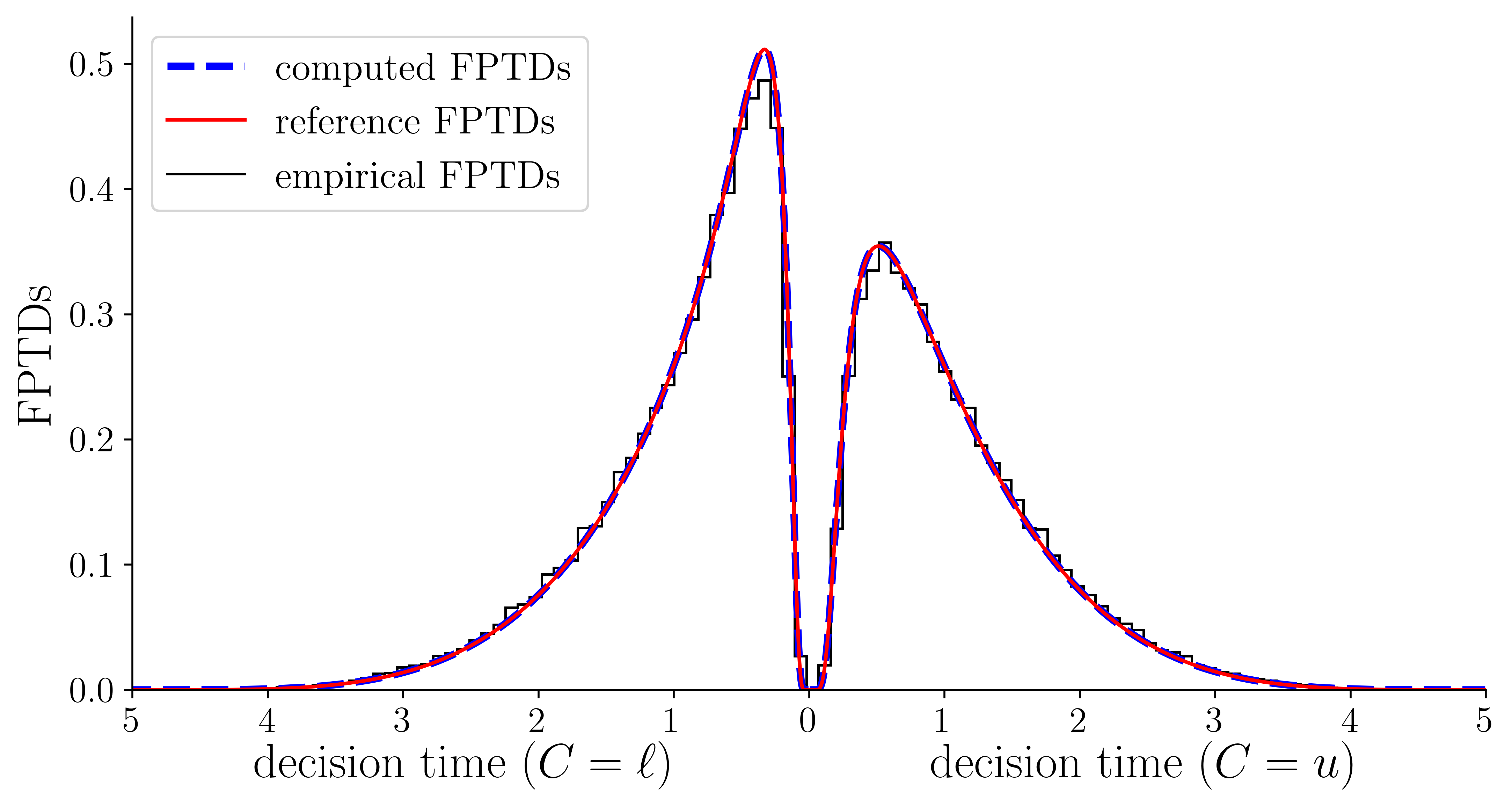}
	\caption{Plots of FPTDs and histogram for a standard Brownian motion hitting two square root boundaries. First passage times on the upper boundary are displayed on the positive $t$-axis and those on the lower boundary are displayed on the negative $t$-axis so that the whole plot is a valid probability density function. The blue dashed line shows the FPTDs produced by Algorithm \ref{alg: seq-multi-quad} with piecewise linear interpolation tolerance $10^{-4}$ and quadrature order $m_k=20$ for all $k$, the red solid line shows the reference FPTDs obtained from solving KFE to very high accuracy, and the histogram is estimated from simulated first passage time data generated using Euler-Maruyama scheme \eqref{eqn: euler-maruyama} with step size $10^{-5}$.}
	\label{fig: error_llhd}
\end{figure*}

\begin{table}[htb!]
	\centering
    
    \caption{Errors of the FPTDs produced by Algorithm \ref{alg: seq-multi-quad} measured relative to the reference FPTDs. The reported error is the total variation distance between the two density functions, computed over a uniform grid of $1000$ time points on $[0,T]$.
    }
    \label{tab: error}

\begin{tabular}{cccccc}
\toprule
\textbf{Tolerance $\varepsilon$}  & $10^{-2}$ & $10^{-3}$ & $10^{-4}$ & $10^{-5}$ & $10^{-6}$ \\\midrule
\textbf{\# Stages} & 11 & 32 & 89 & 259 & 577 \\\midrule
\textbf{TV distance} &$1.849\times 10^{-3}$ &$5.297\times10^{-4}$& $5.292\times10^{-5}$  & $7.113\times10^{-6}$& $2.397\times 10^{-6}$\\\bottomrule
\end{tabular}
\subcaption{We fix the quadrature order $m_k=30$ and consider a sequence of decreasing interpolation tolerances $\varepsilon$, which yields increasingly fine piecewise linear boundary approximations and thus multi-stage models with more stages. We then apply Algorithm \ref{alg: seq-multi-quad} to each approximation and report the resulting errors.}
\resizebox{\columnwidth}{!}{%
\begin{tabular}{cccccc}
\toprule
\textbf{Quadrature order $m_k$}  & $10$ & $15$ & $20$ & $25$ & $30$ \\\midrule
\textbf{TV distance} &$1.482\times 10^{-3}$ &$2.959\times 10^{-4}$& $6.446\times 10^{-5}$  & $5.319\times 10^{-5}$& $5.292\times 10^{-5}$\\\bottomrule
\end{tabular}
}
\subcaption{We approximate the boundaries using adaptive piecewise linear interpolation with $\varepsilon=10^{-4}$, and then apply Algorithm \ref{alg: seq-multi-quad} with a sequence of increasing quadrature orders $m_k$ (taken to be the same across all stages $k$ in this experiment). We report the resulting errors. As $m_k$ increases, the errors initially decrease and then plateau since boundary approximation error dominates.}
\end{table}

From Figure \ref{fig: error_llhd}, we see that the approximate FPTDs closely match the reference FPTDs and are consistent with the simulation results. From Table \ref{tab: error}, the total variation distance between the approximate FPTD and the reference FPTD function decreases monotonically as $\varepsilon$ is reduced and as the quadrature order increases, indicating that finer boundary approximations and higher quadrature order lead to systematically more accurate FPTDs.
\end{appendices}


\newpage
\bibliography{sn-bibliography}

\end{document}